\pdfminorversion=4

%
\documentclass[12pt]{iopart}

\usepackage{times}

\usepackage[utf8x]{inputenc}
\usepackage[english]{babel}
\usepackage[T1]{fontenc}
\usepackage{lmodern}
\usepackage{amssymb}
\usepackage{bbold}
\usepackage{amsthm}
\usepackage{amsfonts}
\usepackage[pdftex]{graphicx}
\usepackage{wrapfig}
\usepackage{epstopdf}

\usepackage{tabularx}

\usepackage{dcolumn}
\usepackage{bm}
\usepackage{color}
\usepackage{etoolbox}
\usepackage{subcaption}

\bibliographystyle{unsrt}

\usepackage{pgfplots}

\usepackage{lipsum}
\usepackage{ulem}
\usepackage{bbold}
\usepackage{bm}
\usepackage{braket}
\DeclareUnicodeCharacter{2010}{-}
\usepackage{cancel}
\usepackage{color}
\usepackage{stackrel}

\newcommand{\Trace}{\mathrm{tr}}
\newcommand{\Kol}{\mathrm{kol}}

\renewcommand{\Tr}[1]{\mathrm{tr}\left[#1\right]}          

\newcommand{\Exp}[1]{\exp \left[ #1 \right]}  


\newcommand{\Span}[1]{\mathrm{span} \left\lbrace #1\right\rbrace}
\newcommand{\dis}{\mathrm{d}}  

\newcommand{\1}{\mathbb{1}}      

\newcommand{\HS}{\mathrm{HS}}    
\newcommand{\SIC}{\mathrm{SIC}}  

\newcommand{\with}{\mathrm{with}}
\newcommand{\und}{\mathrm{and}}

\newcommand{\Hil}{\mathcal{H}}  
\newcommand{\Evl}{\mathcal{V}}  
\newcommand{\E}{\mathcal{E}}    
\newcommand{\B}{\mathcal{B}}    
\newcommand{\F}{\mathcal{F}}    
\newcommand{\A}{\bm{a}}   
\newcommand{\q}{\bm{q}}   
\newcommand{\p}{\bm{p}}   
\newcommand{\x}{\vec{x}}  
\newcommand{\X}{\vec{\bm{X}}}    
\newcommand{\reals}{\mathbb{R}}  
\newcommand{\var}[1]{\mathrm{var}\left(#1\right)}  

\newcommand{\Rho}{\bm{\rho}}  

\newcommand{\sigmabm}{\bm{\sigma}}  
\newcommand{\Deltabm}{\bm{\Delta}}  
\newcommand{\Omegabm}{\bm{\Omega}}  

\renewcommand{\e}{\mathbb{e}}  

\newcommand{\ketbra}[2]{|#1\rangle\langle#2|} 

\newcommand{\ubs}[2]{\stackrel[#1]{}{\underbrace{#2}}}  
\newcommand{\underset}[2]{\stackrel[#1]{}{#2}}          

\newtheorem{thm}{Theorem}

\newtheorem{definition}{Definition}


\begin{document}

\title{Witnessing non-Markovianity by Quantum Quasi-Probability Distributions}

\author{Moritz F. Richter¹, Raphael Wiedenmann¹ and Heinz-Peter Breuer¹²}

\address{¹ Physikalisches Institut, Universit\"at Freiburg, 
Hermann-Herder-Stra{\ss}e 3, D-79104 Freiburg, Germany}
\address{² EUCOR Centre for Quantum Science and Quantum Computing,
University of Freiburg, Hermann-Herder-Stra{\ss}e 3, D-79104 Freiburg, Germany}
\vspace{10pt}
\begin{indented}
\item[]October 2022
\end{indented}

\begin{abstract}
We employ frames consisting of rank-one projectors (i.e.~pure quantum states) and their induced informationally complete quantum measurements (IC-POVMs) to represent generally mixed quantum states by quasi-probability distributions. In the case of discrete frames on finite dimensional systems this results in a vector like representation by quasi-probability vectors, while for the continuous frame of coherent states in continuous variable (CV) systems the approach directly leads to the celebrated representation by Glauber-Sudarshan P-functions and Husimi Q-functions. We explain that the Kolmogorov distances between these quasi-probability distributions lead to upper and lower bounds of the trace distance which measures the distinguishability of quantum states. We apply these results to the dynamics of open quantum systems and construct a non-Markovianity witness based on the Kolmogorov distance of the P- and Q-functions. By means of several examples we discuss the performance of this witness and demonstrate that it is useful in the regime of high entropy states for which a direct evaluation of the trace distance is typically very demanding. For Gaussian dynamics in CV systems we even find a suitable non-Markovianity measure based on the Kolmogorov distance between the P-functions which can alternatively be used as a witness for non-Gaussianity.
\end{abstract}

\noindent{\it Keywords\/}: Open quantum systems, non-Markovian dynamics, quantum measurement, IC-POVM, frames, quasi-probability distributions, Kolmogorov distance

\maketitle

\section{Introduction}
\label{Introduction}

A topical issue in the quantum theory of open systems \cite{Breuer2007} is the definition, detection and quantification of non-Markovian dynamics, i.e. dynamical processes determined by memory effects \cite{Rivas2014a,Breuer2016,deVega2017}. One approach to the definition of non-Markovianity in the quantum regime is based on the concept of an information flow between the open system and its environment \cite{Breuer2009b}. According to this concept Markovian dynamics is characterized by a continuous flow of information from the open system to its environment, while non-Markovian behavior features a flow of information from the environment back to the open system. To formulate these ideas in mathematical terms one has to introduce a suitable measure for the information inside the open system. A natural such measure is given by the trace distance \cite{Nielsen2000} between pairs of open system states, since this distance has a direct interpretation in terms of the distinguishability  of the quantum states \cite{Hayashi2006}. Thus, a quantum process is said to be Markovian if the trace distance and, hence, the distinguishablility decreases monotonically in time and, vice versa, it is non-Markovian if the trace distance shows a non-monotonic behavior, implying a temporary increase of the distinguishability which characterizes the information backflow. We note that there are also entropic measures for this information with similar properties \cite{Megier2021,Settimo2022}. As a further concept one introduces the dynamical maps $\Lambda_t$, $t \geq 0$, which describe the evolution in time of the density matrix $\Rho$ of the open system as $\Rho(t)=\Lambda_t[\Rho(0)]$. This map is known to be completely positive and trace preserving (CPTP) and has the important property to be a contraction for the trace distance \cite{Ruskai1994a,Nielsen2000}. This leads to the following measure for the degree of memory effects of a certain pair of initial states $\Rho_1$ and $\Rho_2$ of the open system \cite{Breuer2009b,Laine2010a},
\begin{eqnarray} \label{NM-measure}
\mathcal{N}(\Rho_1, \Rho_2, \{\Lambda_t\}) := \underset{\sigma \geq 0}{\int} dt \ \sigma(\Rho_1, \Rho_2, t) \\
\with \quad \sigma(\Rho_1, \Rho_2, t) := \frac{d}{dt} \dis_\Trace\big(\Lambda_t[\Rho_1], \Lambda_t[\Rho_2] \big),
\end{eqnarray}
quantifying the total backflow of information form the environment into the system. Obviously, this quantity depends on the chosen pair of initial states. To make this quantity independent of the initial state pair and to obtain a function which only depends on the family of dynamical maps, one can take the maximum over all initial state pair leading to the non-Markovianity measure
\begin{eqnarray} \label{NM-measure-max}
\mathcal{N}(\{ \Lambda_t \}) \ := \ \underset{\Rho_1, \Rho_2}{\mathrm{max}} \mathcal{N}(\Rho_1, \Rho_2, \{ \Lambda_t \}).
\end{eqnarray}

This information flow approach to quantum non-Markovianity has been applied to many theoretical models and experimental systems (see, e.g., the reviews \cite{Rivas2014a,Breuer2016} and references therein). However, most applications deal with relatively simple open systems, describing for example a qubit or a small number of interacting qubits (e.g. a spin-chain) linearly coupled to bosonic or fermionic reservoirs with a structured spectral density. For more complicated systems not only the determination of the dynamical evolution is much more involved, even if it can be represented in terms of an exact or approximate master equation, but also the calculation of the trace distance between the quantum states can become highly demanding, especially for infinite dimensional \textit{continuous variable} (CV) systems. In this paper we want to tackle these difficulties by investigating non-Markovianity measures or at least non-Markovianity witnesses which are more suitable for CV systems, employing representations of quantum states by means of quasi-probability distributions.

To this end, we start in section \ref{Quantum Frames and IC-POVM} by introducing a vector-like representation of quantum states based on generalized bases on bounded operators consisting of a set of fixed but not mutually orthogonal pure states which we will address as \textit{quantum frames}. We show that for any quantum state its decomposition into such a quantum frame can be understood as a quasi-probabilistic mixture, and we will recapitulate how one can connect these quantum frames to so called \textit{informationally complete positive operator valued measure} or IC-POVM \cite{Renes2004-PhD, Renes2004}, a class of quantum measurements especially useful in quantum tomography and even experientially realizable \cite{Bian2015, Bent2015}. The frame based and the IC-POVM based representation of a quantum state can then be used in section \ref{Distance Measures on Quantum States} to define distance measures for quantum states by applying the Kolmogorov distance for probability distribution. We then will show how to use these measures in order to approximate the trace distance.

While section \ref{Quantum Frames and IC-POVM} considers finite dimensional quantum systems only, we will reformulate in section \ref{Coherent States as Continuous Quantum Frame} the idea for the continuous quantum frame of coherent states in CV systems and its connection to the Glauber-Sudarshan P-function and the Husimi Q-function \cite{Serafini2017}. Again we apply the Kolmogorov distance to these quasi-probability distributions in order to approximate the trace distance between given quantum states. We also discuss the performance of this approximation with the help of a class of randomly generated Gaussian states \cite{Adesso2014}. On the basis of this approximation we will construct in section \ref{Non-Markovianity and CV-Systems} a suitable witness for the non-Markovianity in CV systems using P- and Q-functions (subsection \ref{Witnessing non-Markovianity with Q and P-functions}) and illustrate this by means of the example of a non-Markovian quantum oscillator (subsection \ref{Example: Non-Markovian damped oscillator}). In subsection \ref{Gaussian Evolution of P-Function Kolmogorov Distance} we focus on a special class of dynamical maps which preserve the Gaussianity of an initial quantum state and discuss the time evolution of the Kolmogorov distance between P-functions under those maps. This will lead us to a novel measure for non-Markovianity for the case of Gaussian dynamics, since the Kolmogorov distance turns out to be contracting under Gaussian CPTP maps. Finally we draw our conclusions in section \ref{Conclusion} and give an outlook on possible future studies, where one could apply the witnesses and measures of non-Markovianity developed here.

\section{Quantum Frames and IC-POVM}
\label{Quantum Frames and IC-POVM}
We begin with a brief introduction to frames in general Hilbert spaces in order to clarify the terminology and to sketch basic concepts. For more details, the reader is referred to Ref.~\cite{Kovacevic2008}. A \textit{frame} $\F = \{\ket{f_i}\} \subset \Hil$ in a Hilbert space $\Hil$ is a set of vectors satisfying $\Span{\F} = \Hil$. The immediate question now is: What is the difference to a basis in $\Hil$? The answer is the following:
\begin{itemize}
\item Firstly, a basis has to be a minimal set spanning $\Hil$ while a frame might be overcomplete. In this sense a frame is just a generalized notion of a basis.
\item Secondly, when speaking of a \textit{basis} we usually mean an \textit{orthonormal} basis (ONB) for which $\braket{f_i|f_j} = \delta_{ij}$. Although mathematically a basis is neither required to be orthogonal nor normalized, we will use the term \textit{frame} to stress that orthonormality of its elements is not assumed.
\end{itemize}
The possible non-orthogonality causes the major technical differences between ONBs and frames. An important object in frame theory is the frame operator $S: \Hil \to \Hil$ defined by
\begin{equation}
 S \ket{v} = \sum_i \braket{f_i|v}\ket{f_i}.
\end{equation}
In case of an ONB the frame operator is simply the identity operator while for general frames it is not. Later on, this frame operator will provide the link between a frame induced decomposition and a frame induced measurement of a state. Furthermore, we can define a \textit{canonical dual frame} via the frame operator as
\begin{eqnarray}
\tilde{\F} := \{\ket{\tilde{f}_i} = S^{-1} \ket{f_i} \} \qquad \with \quad \ket{v} = \sum_i \braket{\tilde{f}_i| v} \ket{f_i}.
\end{eqnarray}
Again, in the case of an ONB we find $\tilde{\F} = \F$ and, hence, an ONB can be addressed as a \textit{minimal and self-dual frame}. However, although orthogonality simplifies decomposition and spanning significantly, it might be of advantage to use non-orthogonal bases with certain other useful properties instead  as we will do in this paper (see below). Even further, it might be desirable to have a spanning set, i.e. a frame, which covers some part of the vector space at hand more densely such that vectors one wants to decompose are already closer (in a certain sense) to some frame elements and have decompositions more pronounced on a single such element.

Let us now consider a quantum system with finite-dimensional Hilbert space $\Hil$. We denote the space of bounded operators on $\Hil$ by $\B(\Hil)$, which becomes a Hilbert spaces on its own via the Hilbert-Schmidt scalar product
\begin{eqnarray}
\braket{A,B}_\HS := \Tr{A^\dagger B}.
\end{eqnarray}
The physical states of the quantum system are represented by density matrices $\Rho$ which are Hermitian and positive operators with trace one \cite{Nielsen2000}.  Pure states  of the system are represented by rank-one projections $\Rho=\ketbra{\varphi}{\varphi}$, where $\ket{\varphi} \in \Hil$ is a normalized state vector. Any density operator has a spectral decomposition of the form $\Rho = \sum_i p_i \ketbra{i}{i}$ with $\vec{p}=(..., p_i, ...)^T$ a probability distribution and $\{\ket{i}\}$ an ONB in $\Hil$. Note that two different mixed quantum states not only differ in their probability distribution $\vec{p}$ but typically have different ONBs in their spectral decomposition, too. By contrast, the concept of frames allows the following definition which will give a decomposition into a fixed set of pure states.
\begin{definition}
A frame $\F := \{\ketbra{\psi_i}{\psi_i} \} \subset \B(\Hil)$ consisting of pure quantum states is called a \textbf{quantum frame}.
\end{definition}
In the following we will assume $\F$ to be minimal (i.e.~the $\ketbra{\psi_i}{\psi_i}$ form a basis in $\B(\Hil)$ which, however, cannot be orthogonal). Accordingly, any density matrix can be decomposed as
\begin{equation}
 \Rho = \sum_i f_i \ketbra{\psi_i}{\psi_i},
\end{equation}
where we have
\begin{eqnarray}
\Rho^\dagger = \Rho  &\Leftrightarrow&  \ f_i \in \mathbb{R} \ \forall i \\
\Tr{\Rho} = 1 \ &\Leftrightarrow& \ \sum_i f_i = 1 \\
\braket{\varphi|\Rho|\varphi} \geq 0 \ &\Leftrightarrow& \ \sum_i f_i |\braket{\psi_i|\varphi}|^2 \geq 0.
\end{eqnarray}
Note that the last line has to be fulfilled for all $\ket{\varphi}\in\Hil$ in order to ensure the positivity of $\Rho$. The coefficients $f_i$ are referred to as \textit{frame decomposition coefficients} (FDCs). We see that using quantum frames we can decompose any mixed quantum state quasi-probabilistically (i.e.~all $f_i$ are real and sum up to one but might be negative) into a \textit{fixed} set of pure states. Furthermore, $\vec{f} = (\dots, f_i, \dots )^T$ is a vector like representation of the mixed quantum state at hand which we will call the \textit{frame vector} and indicate by means of
\begin{equation}
\vec{f} \simeq \Rho
\end{equation}
the one-to-one correspondence between the frame vector and the density matrix. The frame operator is represented by
\begin{equation}
S_{ij} = \Tr{\ketbra{\psi_i}{\psi_i} \ketbra{\psi_j}{\psi_j} } = |\braket{\psi_i|\psi_j}|^2,
\end{equation}
where $\bm{S} \vec{f}$ is the frame vector of $S(\Rho)$.

\begin{figure}
\center
\includegraphics[width = 5.5cm]{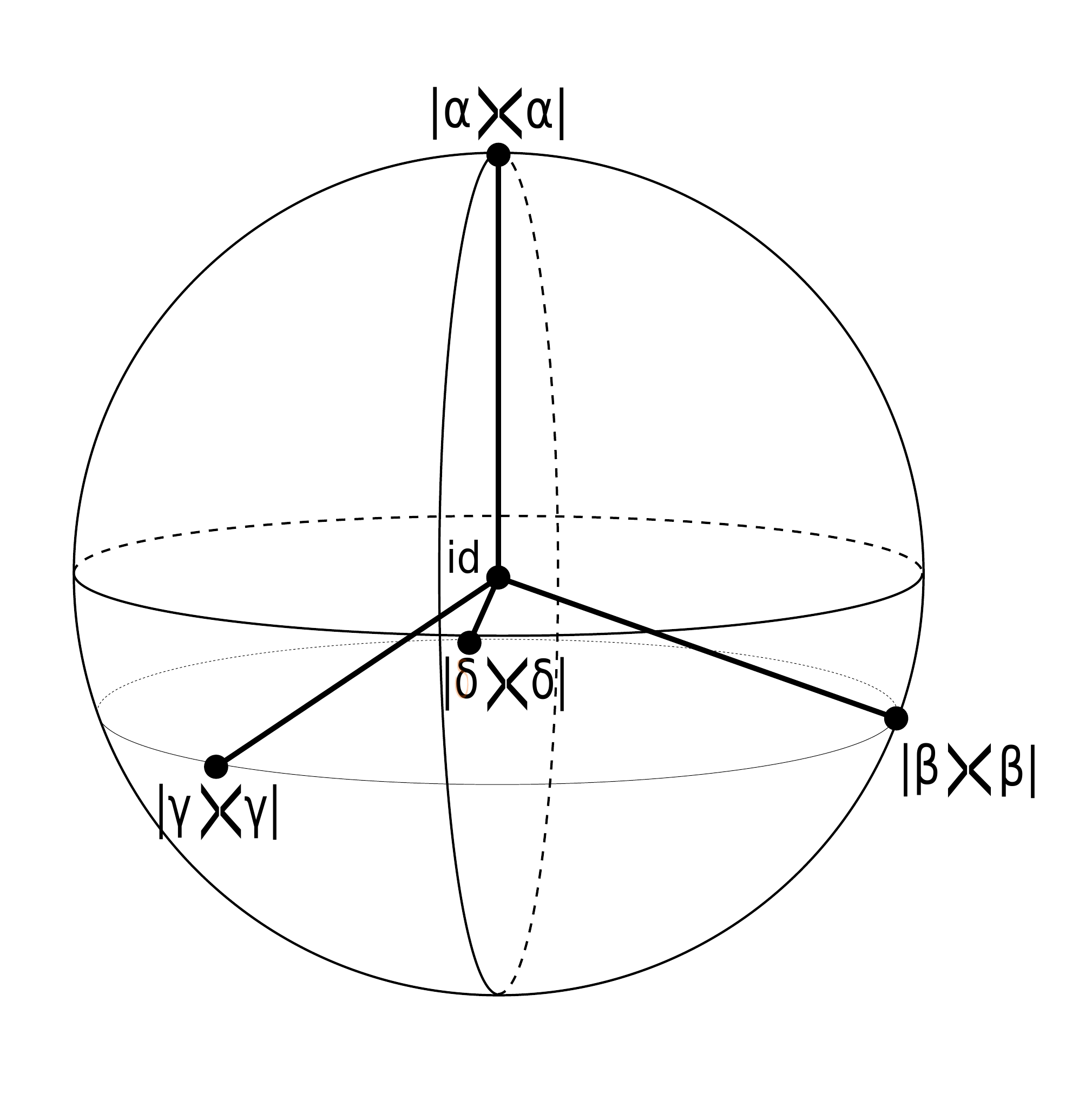}
\caption{\begin{footnotesize} Example of the quantum frame $\F_\mathrm{SIC} = \{ \ketbra{\alpha}{\alpha}, \ketbra{\beta}{\beta}, \ketbra{\gamma}{\gamma}, \ketbra{\delta}{\delta} \}$ inducing a \textit{symmetric-IC} or \textit{SIC-POVM} \cite{Renes2004} on a qubit ($\dim \Hil = 2$) represented by the Bloch sphere. The frame is minimal (i.e. can be regarded as a non-orthonormal basis of $\B(\Hil)$) and due to its symmetry takes the shape of a regular tetrahedron in the Bloch sphere.\end{footnotesize}}
\label{fig:SIC-2D}
\end{figure}

A given quantum frame $\F = \{\ketbra{\psi_i}{\psi_i} \}$ can also be associated to a rank-one \textit{informationally complete positive operator valued measure} or \textit{IC-POVM} given by $\{\E_i := \frac{1}{c_i} \ketbra{\psi_i}{\psi_i} \}$ such that $\sum_i \E_i = \1$  \cite{Renes2004, Renes2004-PhD}. Such an IC-POVM is a mathematical expression of a tomographic quantum measurement mapping a quantum state $\Rho$ to a probability distribution $\vec{p}$ with \cite{Nielsen2000}
\begin{equation}
p_i = \Tr{\E_i \Rho} = \sum_j f_j \frac{1}{c_i} \Tr{\ketbra{\psi_i}{\psi_i} \ketbra{\psi_j}{\psi_j}} = \sum_j \frac{1}{c_i} S_{ij} f_j =: \sum_j M_{ij} f_j,
\label{eq:Mf=p}
\end{equation}
where we have introduced the matrix $\bm{M}$ with elements $M_{ij}=\frac{1}{c_i} S_{ij}$. This means that $\E_i$ is the \textit{effect operator} of outcome $i$ while $p_i$ is the probability to measure this outcome if the system at hand is prepared in state $\Rho$. The term \textit{informationally complete} refers to the fact that due to the underlying frame structure different quantum states always have different IC-POVM probability distributions as well, i.e. the vector $\vec{p}$ encodes complete information about its quantum state $\Rho$ and hence is a vector like representation of it, too,
\begin{equation}
\vec{p} \simeq \Rho,
\end{equation}
as used e.g. in \cite{Kiktenko2020, Yashin2020} (indeed, it is the decomposition into the canonical dual frame of $\F$ which does not consist of pure quantum states anymore). Assuming a minimal quantum frame the matrix $\bm{M}$ is bijective and we find
\begin{equation}
\bm{M}\vec{f} = \vec{p} \qquad \und \qquad \vec{f} = \bm{M}^{-1} \vec{p}.
\end{equation}
One can understand the frame vector $\vec{f}$ as a decomposition based representation of $\Rho$ since it tells us how to construct the quantum state at hand using the given frame. Yet the IC-POVM probability vector $\vec{p}$ is a measurement based representation and encodes the results of measuring $\Rho$ using the frame-corresponding IC-POVM and the link between both representations is the frame operator $S$ reshaped as the matrix $\bm{M}$, as can be seen from Eq.~(\ref{eq:Mf=p}). Again, the difference between both representations is caused by the non-orthogonality of quantum frames.

A prominent and even experimentally realized example (see, e.g., \cite{Bian2015,Bent2015}) for quantum frame based IC-POVMs are \textit{symmetric}-IC-POVMs (or \textit{SIC-POVMs}) which are associated to a minimal frame with $|\braket{\psi_i|\psi_j}|^2 = \frac{1 + n \cdot \delta_{ij}}{1+n}$ and $c_i = n \ \forall i$, where $\dim \Hil = n$ \cite{Renes2004}, see Fig.~\ref{fig:SIC-2D}.

\section{Distance Measures for Quantum States}
\label{Distance Measures on Quantum States}
In section \ref{Quantum Frames and IC-POVM} we have introduced a frame based quasi-probabilistic decomposition of density matrices into a fixed set of pure states forming a so-called quantum frame. In this section we use these representations to introduce distance measures for quantum states and compare them with the trace distance which is known as a measure for the distinguishability of quantum states.

The trace distance between two quantum states $\Rho$ and $\sigmabm$ defined by \cite{Nielsen2000,Hayashi2006}
\begin{equation} \label{eq-trace-distance-1}
\dis_\Trace (\Rho, \sigmabm) :=  \frac{1}{2} ||\Rho - \sigmabm || \equiv \frac{1}{2}\Trace |\Rho - \sigmabm|,
\end{equation}
where $||A||$ denotes the trace norm of an operator $A$, while the modulus is given by $|A| = \sqrt{A^\dagger A}$ \cite{Reed1972}. Defining the selfadjoint operator $\Deltabm = \Rho - \sigmabm$ and introducing its spectral decomposition $\Deltabm = \sum_i \lambda_i \ketbra{i}{i}$, we can thus write the trace distance as a sum over the moduli of the eigenvalues $\lambda_i$ of $\Deltabm$,
\begin{equation} \label{eq-trace-distance-2}
\dis_\Trace (\Rho, \sigmabm) =  \frac{1}{2} \sum_i |\lambda_i|.
\end{equation}
The trace distance between two quantum states $\Rho$ and $\sigmabm$ can be shown to be a measure for the distinguishability of these states \cite{Nielsen2000,Hayashi2006} and has the following useful properties:
\begin{itemize}
\item $0 \leq \dis_\mathrm{tr}(\Rho, \sigmabm) \leq 1$ and $\dis_\mathrm{tr}(\Rho, \sigmabm) = 1 \ \Leftrightarrow \  \Rho \perp \sigmabm$
\item $\dis_\Trace(\Evl[\Rho], \Evl[\sigmabm]) = \dis_\Trace(\Rho, \sigmabm)$ for any unitary transformation $\Evl$
\item $\dis_\Trace(\Lambda[\Rho], \Lambda[\sigmabm]) \leq \dis_\Trace(\Rho, \sigmabm)$ for any \textit{positive trace preserving} (PTP) map $\Lambda$
\end{itemize}
We will use the trace distance later in section \ref{Non-Markovianity and CV-Systems} to quantify the information flow between an open system system and its environment and, hence, to quantify non-Markovianity.

According to Eq.~(\ref{eq-trace-distance-2}) we have to determine the spectrum of $\Deltabm = \Rho - \sigmabm$ in order to compute the trace distance between two quantum states $\Rho$ and $\sigmabm$. However, in many practical applications this can be a demanding problem, in particular in the case of high-dimensional or even infinite-dimensional Hilbert spaces. If, for example, the quantum states at hand represent mixtures of many basis states of a continuous variable system (see below), it can be a hard task to compute a full matrix representation of $\Deltabm$ and evaluate its eigenvalues.

One possible strategy to avoid these difficulties - which we will investigate in this paper - is to use the vector like frame and IC-POVM based representations of the quantum states under consideration. Since both are quasi-probabilistic one can apply distance measures for classical probability distributions to them, e.g.~the Kolmogorov distance \cite{Nielsen2000}
\begin{eqnarray}
\dis_\Kol(\vec{p}, \vec{q}) := \frac{1}{2}\sum_i |p_i - q_i|,
\end{eqnarray}
where $\vec{p}$ and $\vec{q}$ are two probability distributions. With these definitions one finds the following chain of inequalities,
\begin{eqnarray}
\dis_\Kol(\vec{p}_1, \vec{p}_2) \leq \dis_\Trace(\Rho_1, \Rho_2) \leq \dis_\Kol(\vec{f}_1, \vec{f}_2),
\label{eq:p-dis<tr-dis<f-dis}
\end{eqnarray}
which states that the trace distance between any pair of quantum states is bounded from below by the Kolmogorov distance of the corresponding quasi-probability vectors, and from above by the Kolmogorov distance between the corresponding frame vectors. The proof is rather straight forward and given in appendix \ref{Proof of Eq. (eq:p-dis<tr-dis<f-dis)}. Furthermore, please note that the left inequality can also be deduced from the more general statement \cite{Nielsen2000}
\begin{eqnarray}
\dis_\Trace(\Rho_1, \Rho_2) \ = \ \underset{\{\E_i\} \in \mathrm{POVMs}}{\mathrm{max}} \ \dis_\Kol(\vec{p}_{\{\E_i\}}(\Rho_1), \vec{p}_{\{\E_i\}}(\Rho_2))
\end{eqnarray}
where the maximum is taken over all possible POVMs (not just IC-POVMs) and $\vec{p}_{\{\E_i\}}(\Rho)$ means the probability distribution of a POVM $\{\E_i\}$ applied to a quantum state $\Rho$.

\begin{figure}
\includegraphics[width=0.95\columnwidth]{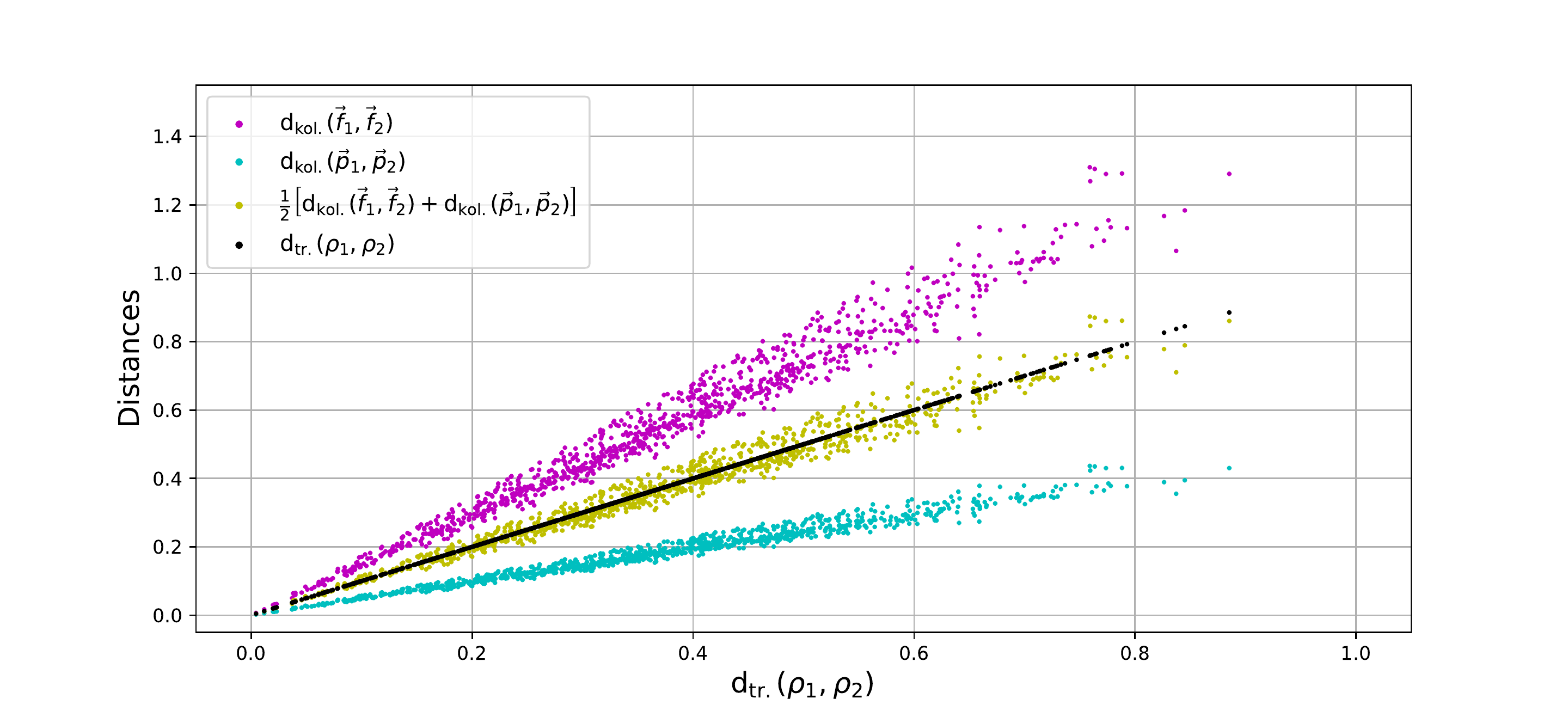}
\caption{\begin{footnotesize}A sample plot of 1000 randomly generated qubit state pairs $\Rho_1$ and $\Rho_2$. Plotted are the Kolmogorov distance between their frame vectors $\vec{f}_i$, their probability vectors $\vec{p}_i$ and the average $\frac{1}{2}(\dis_\Kol(\vec{f}_1, \vec{f}_2) + \dis_\Kol\big(\vec{p}_1, \vec{p}_2)\big)$ of both using the a symmetric minimal frame and its corresponding SIC-POVM over the trace distance $\dis_\Trace(\Rho_1,\Rho_2)$. The plot of trace distance over trace distance represented by the diagonal is shown for orientation as well. \end{footnotesize}}
\label{fig:Sample-Plot_Qubit-SIC.pdf}
\end{figure}

Figure \ref{fig:Sample-Plot_Qubit-SIC.pdf} illustrates the inequalities (\ref{eq:p-dis<tr-dis<f-dis}) for a SIC-POVM and its minimal and symmetric frame $\F_\SIC$ for a qubit. For randomly generated pairs of quantum states the different distance measures are plotted over the trace distance. First we see that in accordance with (\ref{eq:p-dis<tr-dis<f-dis}) the Kolmogorov distance between the frame vectors and between the SIC-POVM probability distributions - encoded in their probability vectors $\vec{p}$ - is allways above and below the trace distance, respectively. Furthermore, we find that with increasing trace distance the spreading of the Kolmogorov distances between frame vectors and between probability vectors increases as well. This is due to the fact that almost orthogonal state pairs allow for a larger variety of positioning relatively to the convex hull of the quantum frame $\F$ at hand. Either one of the two states is within the convex hull and its orthogonal complement is not, or both states are outside the convex hull. By definition states within the convex hull have frame vectors consisting of entries $f_i \in [0,1]$ only, while states outside the convex hull must have at least one negative entry and the sum of the other entries must be greater than one to compensate this negativity. Accordingly, the absolute values $|f_i|$ of the FDCs of a state inside the convex hull are on average smaller than in case of states outside the convex hull. Thus, if at least one state lies within the convex hull the Kolmogorov distance between the frame vectors will be smaller than in the case when both states are outside the convex hull.Further, this also explains that their is no general functional relation between trace distance and Kolmogorov distance between frame vectors since two pairs of quantum states might have the same trace distance but the Kolmogorov distance between their frame vectors might be different anyway. The same holds for probability vectors, though less pronounced since their Kolmogorov distances have to be smaller in general compared the distances between their corresponding frame vectors and hence, the probability vectors are closer to each other by construction. Interestingly, we find that the average of both Kolmogorov distances, i.e. $\frac{1}{2}\big(\dis_\Kol(\vec{f}_1, \vec{f}_2) + \dis_\Kol(\vec{p}_1, \vec{p}_2)\big)$, spreads as well with increasing trace distance but almost symmetrically around the trace distance.

\section{Coherent States as Continuous Quantum Frame}
\label{Coherent States as Continuous Quantum Frame}
We extend the concepts and investigations of section \ref{Distance Measures on Quantum States} to continuous variable systems, employing coherent states often used, e.g., in quantum optical systems, which actually form a continuous quantum frame \cite{Antoine2018} as will be explained below. We start with a brief summary of the features which are relevant for our purposes, further details may be found in Refs.~\cite{Klauder1985, A.Ferraro2005, Adesso2014, Serafini2017}.

Let us consider a \textit{continuous variable} (\textit{CV}) system, describing a set of continuous degrees of freedom of a quantum mechanical system, for example a harmonic oscillator or a bosonic  field mode. For simplicity we assume only a single mode but the concepts can easily be extended to multi-mode systems. As usual we denote the vacuum or ground state by $\ket{0}$ and the annihilation and creation operators by $\A$ and $\A^\dagger$, respectively. The Weyl displacement operator is given by $\bm{D}(\alpha) = \exp[\alpha \A^\dagger - \alpha^* \A]$ with $\alpha \in \mathbb{C}$ and $\alpha^*$ its complex conjugate. A coherent state is defined as \cite{Serafini2017}
\begin{eqnarray}
\ket{\alpha} := \bm{D}(\alpha)\ket{0} = \e^{-|\alpha|^2/2} \sum_{n=0}^\infty \frac{\alpha^n}{\sqrt{n!}} \ket{n}
\end{eqnarray}
with $\ket{n}$ the number states (Fock basis). The coherent state are eigenstates of the annihilation operator, $\A \ket{\alpha} = \alpha \ket{\alpha}$. Furthermore, the real and imaginary part of the complex number $\alpha$ describes the expectation values of position $q$ and of momentum $p$ of the coherent state. In fact, choosing appropriate units for position and momentum we have
\begin{eqnarray}
q = \sqrt{2} \ \Re(\alpha) \quad \und \quad p = \sqrt{2} \ \Im(\alpha).
\label{eq:sqrt(2)alpha=x}
\end{eqnarray}
Thus, we can regard the complex plane as a representation of the phase space.

Using (\ref{eq:sqrt(2)alpha=x}) we can express the Weyl displacement operator and coherent states in proper phase space coordinates $\x := (q,p)^T$. Let $\q$ and $\p$ be position and momentum operator, respectively, which are connected to creation and annihilation operators via the relations
\begin{eqnarray}
\q = \sqrt{2}(\A^\dagger + \A) \qquad \und \qquad \p = \sqrt{2}(\A^\dagger - \A).
\end{eqnarray}
With the help of the symplectic form of the phase space 
\begin{equation}
\Omegabm = \left(\begin{array}{cc} 0 & 1 \\ -1 & 0 \end{array}\right)
\end{equation} 
and $\X := (\q, \p)^T$ we can finally write \cite{Serafini2017}
\begin{eqnarray}
\bm{D}(-\x) := \bm{D}\big(\alpha(\x)\big) = \exp[-i \x^T \ \Omegabm \ \X],
\end{eqnarray}
where $\alpha(\x)$ simply means the complex number representing the point $\x$ in phase space as described by (\ref{eq:sqrt(2)alpha=x}). Accordingly, one can equivalently define coherent states as
\begin{eqnarray}
\ket{\x} := \ket{\alpha(\x)} = \e^{-|\x|^2/4} \sum_{n=0}^\infty \frac{(q + ip)^n}{\sqrt{2^n n!}} \ket{n}.
\end{eqnarray}
Although coherent states and characteristic functions (see below) are usually introduced using the complex plane representation, we will continue using proper phase space vectors $\x$ since this vector representation simplifies the notation of Gaussian states (see below). 

For mixed quantum states in CV systems there is a useful representation by characteristic functions and quasi-probability distributions \cite{Adesso2014, Serafini2017}.The characteristic function of a mixed quantum state $\Rho$ is defined by
\begin{eqnarray}
\chi_{\Rho}^s (\x) = \Tr{\Rho \bm{D}(-\x)} \e^{\frac{s}{4} |\x|^2},
\end{eqnarray}
where $s \in [-1, 1]$. Its Fourier transform is the corresponding quasi-probability distribution 
\begin{eqnarray}
W_{\Rho}^s (\x) = \frac{1}{(2\pi)^2} \int_{\reals^2} d \vec{y} \  \chi_{\Rho}^s (\vec{y}) \e^{-i \vec{y}^T \Omegabm^T \x}.
\end{eqnarray}
Note that these distributions are real-valued, integrate to one but might be locally negative. For $s=0$ one finds the so called \textit{Wigner function} $W_{\Rho}^0 = W_{\Rho}$ \cite{Wigner1932}, in the case $s = -1$ we obtain the \textit{Husimi Q-function} $W_{\Rho}^{-1} = Q_{\Rho}$ \cite{Husimi1940}, while $s = 1$ yields the \textit{Glauber-Sudarshan P-function} $W_{\Rho}^1 = P_{\Rho}$ \cite{Glauber1963, Sudarshan1963}.

As already mentioned, coherent states also form a continuous quantum frame closely related to the P- and Q-functions. It can be shown \cite{Serafini2017} that
\begin{eqnarray}
\Rho &= \int_{\reals^2} d^2x \ P_{\Rho}(\x) \ketbra{\x}{\x}
\label{eq:P-function}
\end{eqnarray}
and, invoking the completeness relation $\1 = \frac{1}{\pi} \int_{\reals^2} d^2x \ketbra{\x}{\x}$, one also finds
\begin{eqnarray}
Q_{\Rho}(\x) \ &= \ \frac{1}{\pi} \braket{\x|\Rho|\x} = \Tr{\E(\x) \Rho},
\label{eq:Q-function}
\end{eqnarray}
where
\begin{equation}
 \E(\x) = \frac{1}{\pi} \ketbra{\x}{\x}
\end{equation}
are the elements of the continuous IC-POVM associated with the frame of coherent states. Thus, in the context of coherent states as quantum frame we find that the P-function and the Q-function are analogous to the frame vector $\vec{f}$ and to the IC-POVM probability distribution $\vec{p}$ of section \ref{Quantum Frames and IC-POVM}, respectively. This analogy also explains why the Q-function is a real probability distribution for any $\Rho$, i.e.~takes only non-negative values for any quantum state. Furthermore, we can use (\ref{eq:P-function}) and (\ref{eq:Q-function}) to express the transformation from one distribution to the other as
\begin{equation}
 Q(\x) = \mathcal{M}[P](\x) = \int_{\reals^2} d^2y \ m(\x, \vec{y}) P(\vec{y}),
\end{equation}
where
\begin{equation}
 m(\x, \vec{y}) = \frac{1}{\pi} |\braket{\x|\vec{y}}|^2 = \frac{1}{\pi} \e^{-|\x - \vec{y}|^2/2}.
\label{eq:P->Q}
\end{equation}
The transition kernel $m(\x, \vec{y})$ represents the continuous version of the matrix $M_{ij}$ in (\ref{eq:Mf=p}) and again basically consists of the kernel expression of the frame operator $s(\x, \vec{y}) = |\braket{\x|\vec{y}}|^2 = \e^{-|\x -\vec{y}|^2/2}$. And again one can derive the same chain of inequalities between the trace distance of two quantum states and the Kolmogorov distance between their Q- and P-functions
\begin{eqnarray}
\dis_\Kol(Q_1, Q_2) \ \leq \ \dis_\Trace(\Rho_1, \Rho_2) \ \leq \ \dis_\Kol(P_1, P_2).
\label{eq:Q-dis<tr-dis<P-dis}
\end{eqnarray}
Here, the Kolmogorov distance of two continuous distributions $R_1$ and $R_2$ over phase space is defined by
\begin{equation}
 \dis_\Kol(R_1, R_2) = \frac{1}{2} \int_{\reals^2} d^2x \ |R_1(\x) - R_2(\x)|.
\end{equation}
Again one can find the proof for both inequalities in appendix \ref{Proof of Eq. (eq:Q-dis<tr-dis<P-dis)}, though the second inequality can also been found in \cite{Nair2017}.

A special class of quantum states in CV systems are so called \textit{Gaussian states} which are usually defined by having a characteristic function for $s=0$ that takes a Gaussian form \cite{Adesso2014}
\begin{eqnarray}
\chi^0_{\Rho}(\x) \ = \ \Exp{-\frac{1}{4} \x^T \Omegabm^T \sigmabm \Omegabm \x \ + \ i \x^T \Omegabm^T \vec{d}},
\end{eqnarray}
where $\sigmabm$ captures the statistical moments of second order (covariances) and $\vec{d}$ the statistical moments of first order (mean values) of the operators $\q = \bm{X}_1$ and $\p = \bm{X}_2$
\begin{eqnarray}
\sigma_{ij} &= \braket{\bm{X}_i \bm{X}_j + \bm{X}_j \bm{X}_i} - 2\braket{\bm{X}_i}\braket{\bm{X}_j} \quad \und \quad d_i &= \braket{\bm{X}_i}.
\end{eqnarray}
With
\begin{eqnarray}
\chi_{\Rho}^s(\x) \ = \ \chi_{\Rho}^0(\x) \ \e^{\frac{s}{4}|\x|^2}
\end{eqnarray}
we see immediately that for any value of $s$ the characteristic function will be Gaussian as well and since Gaussianity is preserved under Fourier transformation same is true for any quasi-probability distribution. Using $|\vec{x}|^2 = \vec{x}^T \vec{x} = \vec{x}^T \Omegabm^T \Omegabm \vec{x}$ one finds
\begin{eqnarray}
\chi_{\Rho}^s(\x) \ &= \ \Exp{-\frac{1}{4} \x^T \Omegabm^T (\sigmabm - s\1) \Omegabm \x \ + \ i \x^T \Omegabm^T \vec{d}}
\\
&= \ \Exp{-\frac{1}{4} \x^T \Omegabm^T \bm{\sigma}_s \Omegabm \x \ + \ i \x^T \Omegabm^T \vec{d}}
\label{eq:Chi^s_Rho}
\end{eqnarray}
with $\bm{\sigma}_s := \sigmabm - s\1$ and by this
\begin{eqnarray}
W_{\Rho}^s(\x) \ = \ \frac{1}{\pi \sqrt{\det \bm{\sigma}_s}} \Exp{-(\x - \vec{d})^T \bm{\sigma}_s^{-1} (\x - \vec{d})}.
\end{eqnarray}
Note that for coherent states $\ketbra{\x}{\x}$ we have $\sigmabm = \1$ and $\vec{d} = \x$ such that the P function reduces to $P_{\ketbra{\x}{\x}}(\vec{y}) = W_{\ketbra{\x}{\x}}^{s=1}(\vec{y}) = \delta(\vec{y} - \x)$ as expected. Furtheron we define $\sigmabm_P := \bm{\sigma}_1$ and $\sigmabm_Q := \bm{\sigma}_{-1}$ as the covariance matrices for P and Q function over phase space. These definitions together with $\bm{\sigma}_s := \sigmabm - s\1$ lead to
\begin{eqnarray}
\sigmabm_Q = \sigmabm_P + 2\cdot\1,
\label{eq:sigma_Q=sigma_P+4}
\end{eqnarray}
i.e. for a given Gaussian quantum state its P- and Q-function are both Gaussian shaped, centered around the same mean value $\vec{d}$ and the Q-function is broadened compared to the P-function by an offset of $2\cdot\1$ in the covariance matrix.

\begin{figure}[h!]
\begin{subfigure}{0.49\columnwidth}
\includegraphics[width=\linewidth]{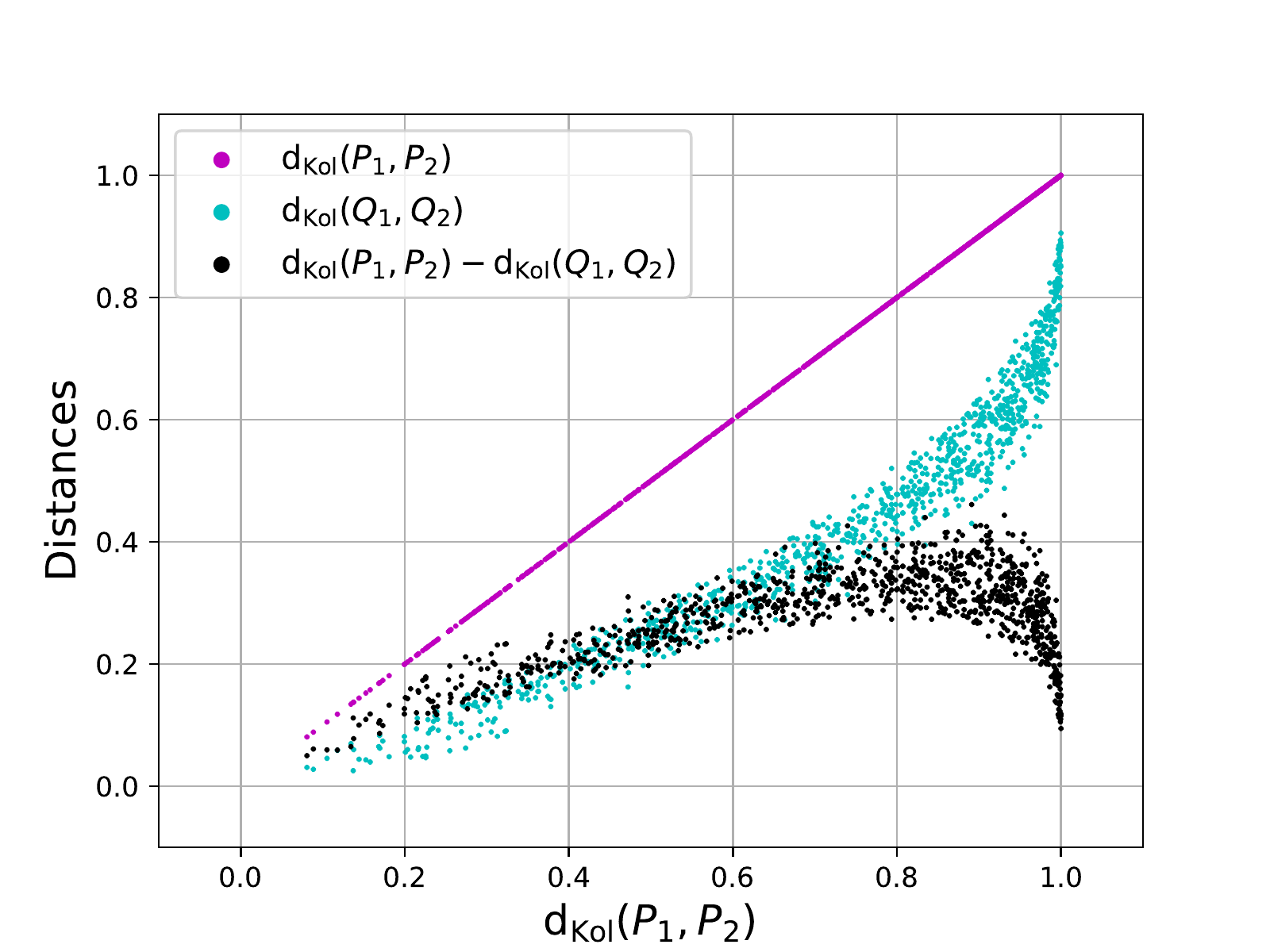}
\caption{$ \ \ d \in [0,2]$, $\varphi_d \in [-\pi, \pi]$ \\ $\qquad \sigma_{ii} \in [0.25, 1]$, $\varphi \in [0,2\pi]$}
\end{subfigure}\hfill
\begin{subfigure}{0.49\columnwidth}
\includegraphics[width=\linewidth]{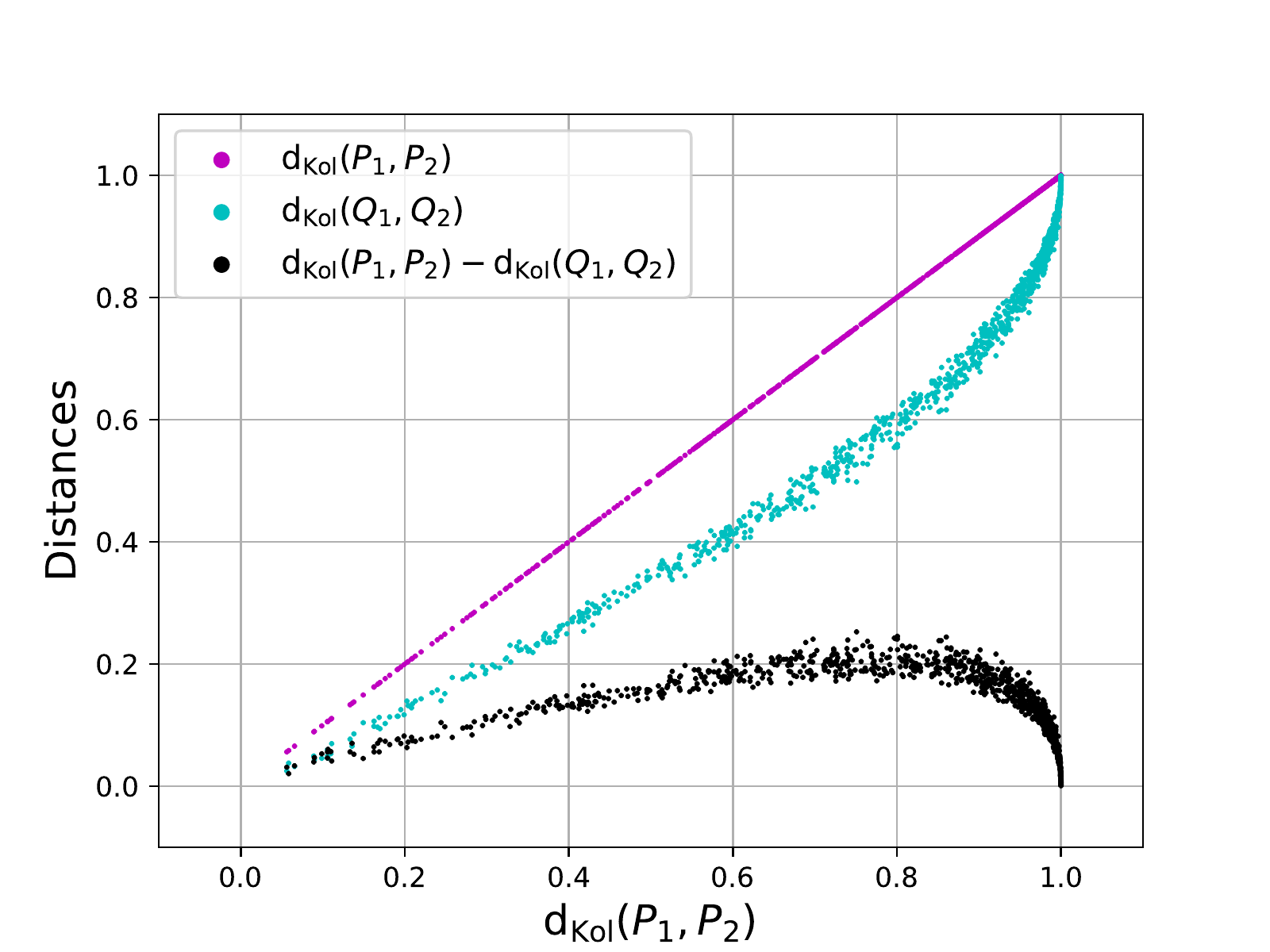}
\caption{$ \ \ d \in [0,5]$, $\varphi_d \in [-0.5\pi, 0.5\pi]$ \\ $\qquad \sigma_{ii} \in [1,2]$, $\varphi \in [0,2\pi]$}
\end{subfigure}\\[1em]
\begin{subfigure}{0.49\columnwidth}
\includegraphics[width=\linewidth]{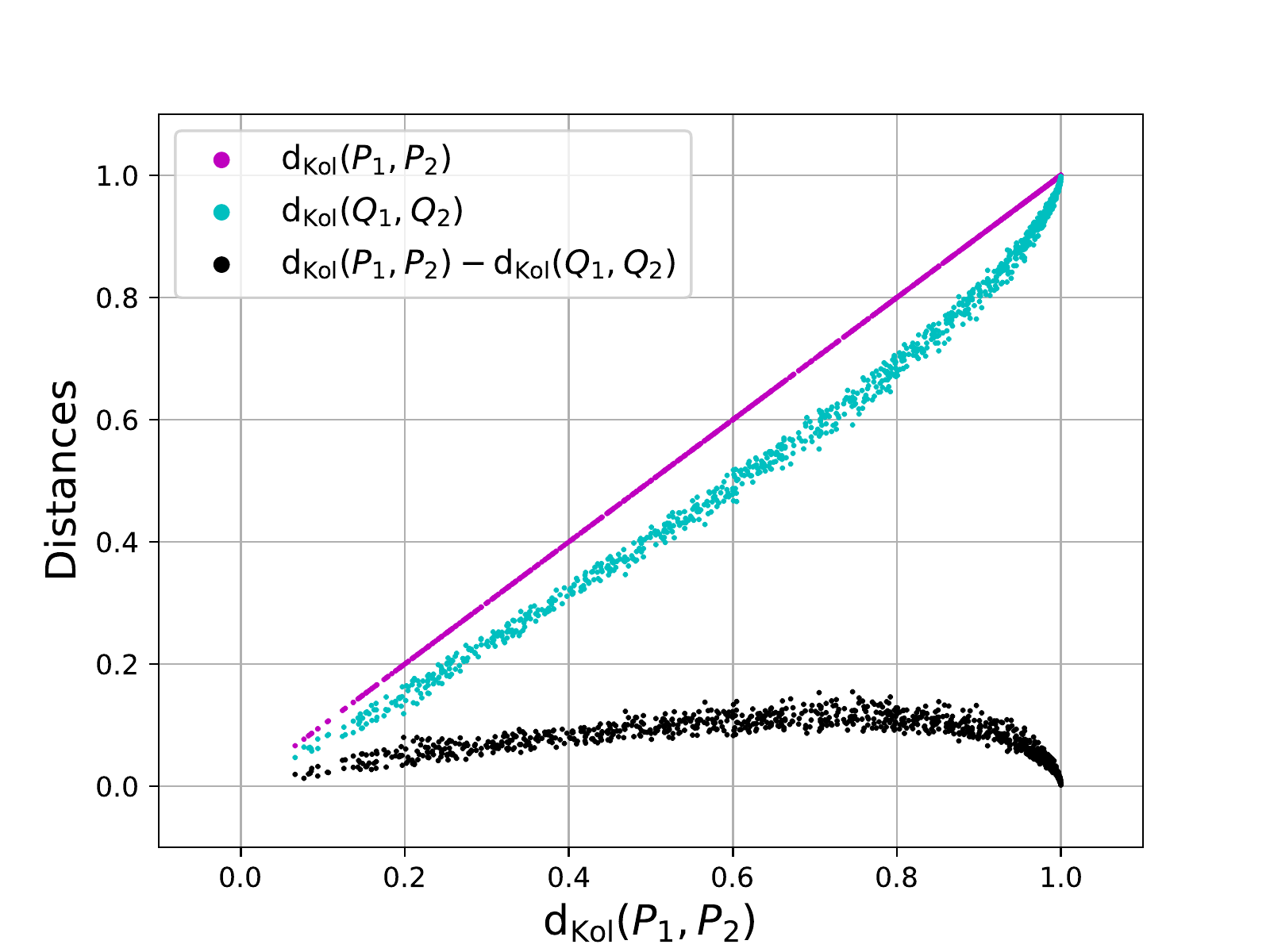}
\caption{$ \ \ d \in [0,10]$, $\varphi_d \in [-0.05\pi, 0.05\pi]$ \\ $\qquad \sigma_{ii} \in [2,5]$, $\varphi \in [0,2\pi]$}
\end{subfigure}\hfill
\begin{subfigure}{0.49\columnwidth}
\includegraphics[width=\linewidth]{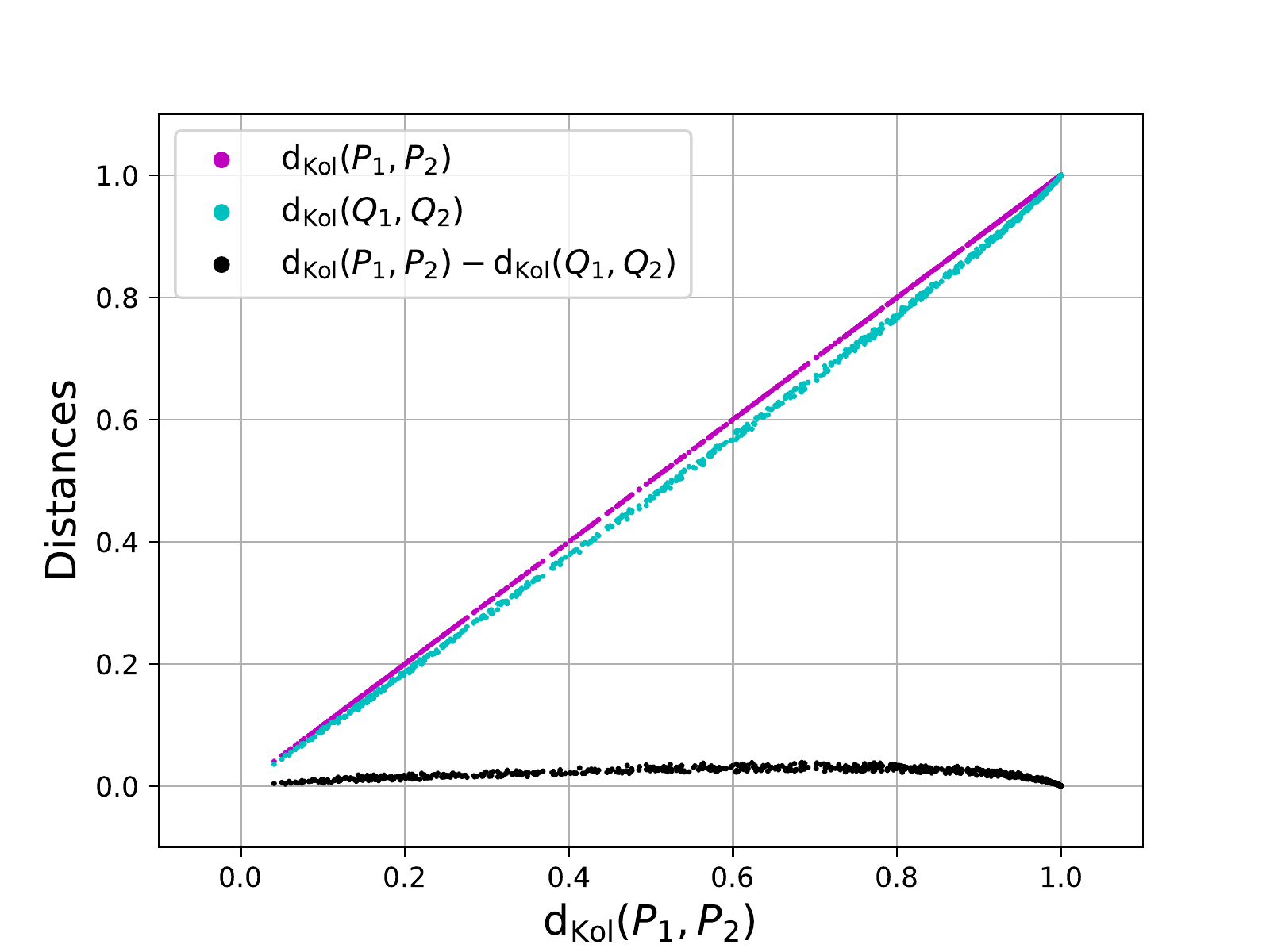}
\caption{$ \ \ d \in [0,30]$, $\varphi_d \in [-0.005\pi, 0.005\pi]$ \\ $\qquad \sigma_{ii} \in [10, 20]$, $\varphi \in [0,2\pi]$}
\end{subfigure}
\caption{\begin{footnotesize}A sample plot of $\dis_\Kol(P_1, P_2)$ and $\dis_\Kol(Q_1, Q_2)$ as well as their difference over $\dis_\Kol(P_1, P_2)$ for 1000 randomly drawn pairs of Gaussian quantum states in four different scenarios. According to (\ref{eq:Gaussian-state_parameters}) each Gaussian state is defined by its parameters $d$, $\varphi_d$, $\sigma_{00}$, $\sigma_{11}$ and $\varphi$ which are drawn uniformly from the intervals indicated in the captions to each subplot. The first scenario (a) describes a low entropy regime, while the last scenario (d) is characterized by large entropies, involving a much broader distributions of excited Fock states. The scenarios (b) and (c) illustrate the transition between both regimes. \end{footnotesize}}
\label{fig:Sample-Plots_coherent}
\end{figure}

Figure \ref{fig:Sample-Plots_coherent} shows a sample plot for a CV-system similar to Fig.~\ref{fig:Sample-Plot_Qubit-SIC.pdf} for the qubit. For simplicity we restrict ourselves to Gaussian quantum states as introduced above with their P-function given as
\begin{eqnarray}
P(\x) \ = \ \frac{1}{\pi\sqrt{\det \sigmabm_P}}\exp\left[-(\x-\vec{d})^T\sigmabm_P^{-1}(\x-\vec{d})\right].
\label{eq:Gaussian-P-function}
\end{eqnarray}
We plot the Kolmogorov distances $\dis_\Kol(P_1,P_2)$, $\dis_\Kol(Q_1,Q_2)$ and their difference $\dis_\Kol(P_1,P_2) - \dis_\Kol(Q_1,Q_2)$ over $\dis_\Kol(P_1,P_2)$ for different scenarios of state pairs. As stated above each Gaussian state is uniquely defined by the displacement of its center $\vec{d}$ and its covariance matrix $\sigmabm_P$. The displacement and the covariance matrix is expressed as
\begin{eqnarray}
\vec{d} = d \left(\begin{array}{cc} \cos(\varphi_d) \\ \sin(\varphi_d) \end{array}\right) \qquad \und \qquad \sigmabm_P = S(\varphi) \left(\begin{array}{cc} \sigma_{00} & 0 \\ 0 & \sigma_{11} \end{array}\right) S(\varphi)^T,
\label{eq:Gaussian-state_parameters}
\end{eqnarray}
where $S(\varphi)$ is a rotation by $\varphi$ in the phase plane. Thus we can characterize any Gaussian state by its randomly drawn parameters $d$, $\varphi_d$, $\sigma_{ii}$ and $\varphi$.

In Fig.~\ref{fig:Sample-Plots_coherent} we observe a transition of the differences between the two distances for increasing values of $\sigma_{ii}$. If on the one hand the $\sigma_{ii}$ are small the state $\Rho = \int_{\reals^2}d^2x P(\x) \ketbra{\x}{\x}$ has a low von Neumann entropy (see below) and we can approximate it by means of the matrix representation $\rho_{mn} = \braket{m|\Rho|n}$ using only a small number of Fock basis states $\ket{m}$. This is the scenario plotted in Fig.~\ref{fig:Sample-Plots_coherent}(a) from which one can immediately see that the difference between $\dis_\Kol(P_1,P_2)$ and $\dis_\Kol(Q_1,Q_2)$ is rather large and, hence, both distances do not yield a good approximation of the trace distance $\dis_\Trace$. This effect can be explained by the relation (\ref{eq:sigma_Q=sigma_P+4}) between the covariance matrices of the P-function and the Q-function. In case of small diagonal entries in $\sigmabm_P$ the offset of $2\cdot\1$ widens the distribution significantly. As a consequence, although two close but very thin P-functions might have a Kolmogorov distance of almost one, their corresponding Q-functions will show significant overlap and hence a much smaller Kolmogorov distance. However, note that in this low entropy regime a direct evaluation of the trace distance by means of a truncated Fock space representation is typically feasible and much more efficient.

On the other hand, if the $\sigma_{ii}$ are large this results in a large von Neumann entropy and due to the broad distribution a large number of relevant entries in the Fock-state representation. Figure \ref{fig:Sample-Plots_coherent}(d) shows this case and we see that $\dis_\Kol(P_1,P_2)$ and $\dis_\Kol(Q_1,Q_2)$ are always very close to each other. Again, we can understand this effect by using (\ref{eq:sigma_Q=sigma_P+4}) since now the width $\sigmabm_P$ is already so large that the offset of $2\cdot\1$ does not cause a significant change. Accordingly, in this large entropy regime the Kolmogorov distance approximates the trace distance quite well. This is of particular interest since the Fock-state matrix representation will include such a huge number of relevant entries that a direct computation of the trace distance would turn out to be highly challenging. The plots in Fig.~\ref{fig:Sample-Plots_coherent}(b) and (c) indicate the transition between the low and high entropy regimes, which might also be characterized as quantum and classical like regimes.

The connection of the different scenarios of Fig.~\ref{fig:Sample-Plots_coherent} to the von Neumann entropy is illustrated in Fig.~\ref{fig:Dis-dif_over_S}, where one finds for the state pairs of Fig.~\ref{fig:Sample-Plots_coherent} a plot of their differences of Kolmogorov distances between P- and Q-functions over their average von Neumann entropy $\frac{1}{2}\left(S_\mathrm{v.N.}(\Rho_1) + (S_\mathrm{v.N.}(\Rho_2) \right)$. The von Neumann entropy of a Gaussian state can be computed using the symplectic eigenvalues of its Wigner functions covariance matrix $\sigmabm_0 =: \sigmabm$ \cite{Serafini2017}. The symplectic eigenvalues $\pm v$ of $\sigmabm$ are the eigenvalues of the matrix $i \Omegabm \sigmabm$ which come as a pair of two real numbers differing only in their sign \cite{Serafini2017}, i.e. we can define \textit{the} symplectic eigenvalue $v$ as the positive one (in case of more than one mode we find one symplectic eigenvalue for each mode). The von Neumann entropy then reads \cite{Serafini2017}
\begin{equation}
 S_\mathrm{v.N.}(\Rho) = -\tr \{ \Rho \log_2 \Rho \} = \frac{v+1}{2} \log_2\left(\frac{v+1}{2} \right) - \frac{v-1}{2}
 \log_2\left(\frac{v-1}{2} \right).
\label{eq:S_v.N.}
\end{equation}
Of course the von Neumann entropy of two states in a pair might differ but since both states are generated using parameters from the same intervals, their von Neumann entropies are typically of the same order of magnitude. One finds a clear transition from large scattering in case of low entropies to very small scattering in case of high entropy. The four clusters seen in Fig.~\ref{fig:Dis-dif_over_S} correspond to the four scenarios shown in Fig.~\ref{fig:Sample-Plots_coherent}.

\begin{figure}[h!]
\center
\includegraphics[width=0.8\textwidth]{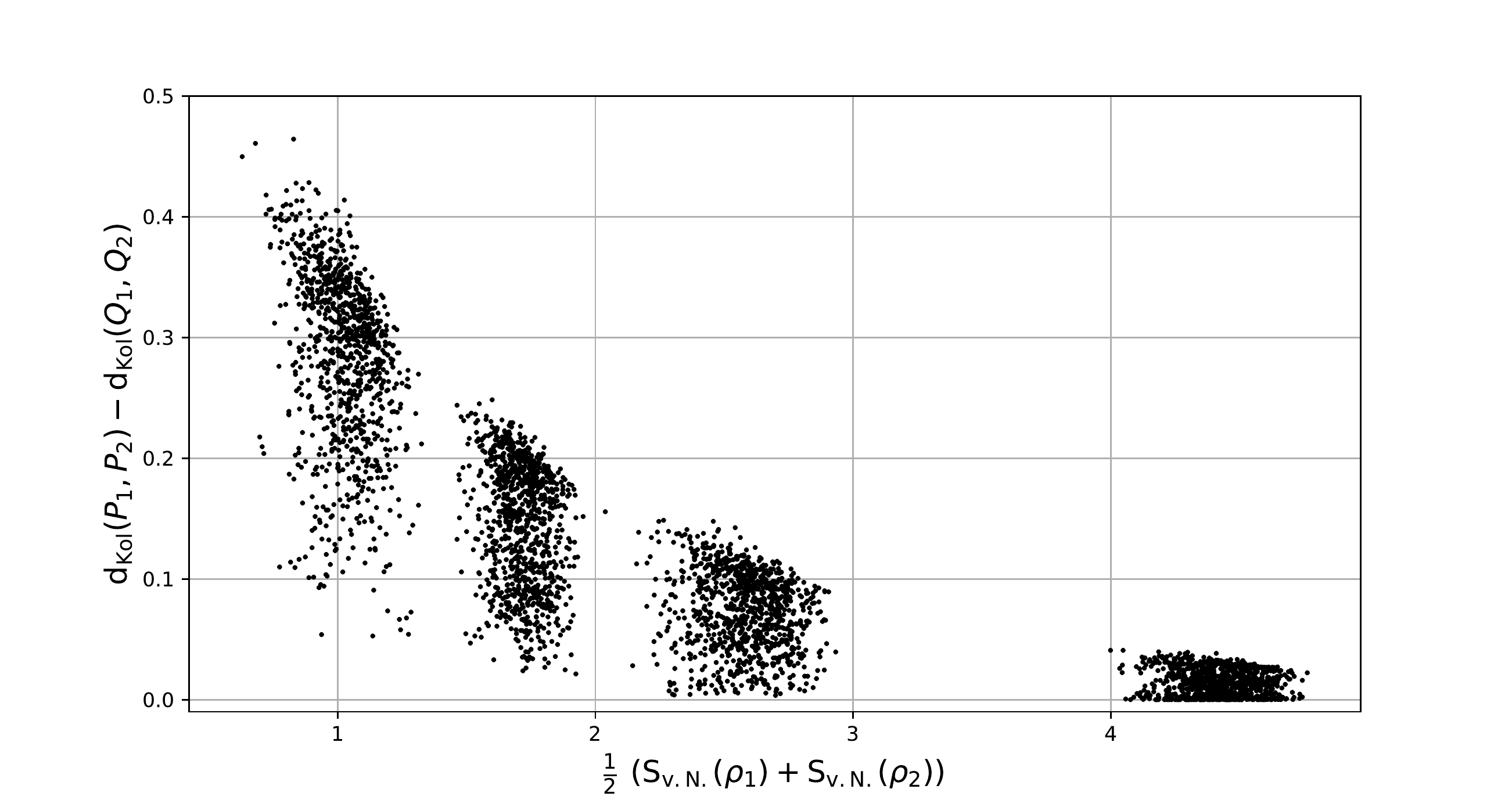}
\caption{\begin{footnotesize}Difference between the Kolmogorov distances between P- and Q-functions $\dis_\Kol(P_1,P_2) - \dis_\Kol(Q_1,Q_2)$ for each pair plotted over their average von-Neumann entropy $\frac{1}{2}\left(S_\mathrm{v.N.}(\Rho_1) + (S_\mathrm{v.N.}(\Rho_2) \right)$ computed numerically by using (\ref{eq:S_v.N.}).\end{footnotesize}}
\label{fig:Dis-dif_over_S}
\end{figure}

Finally, we have to make some general remarks concerning the plots of Fig.~\ref{fig:Sample-Plots_coherent}. Firstly, we see that even the Kolmogorov distance of the P functions - in contrast to the Kolmogorov distance of the frame vectors in the qubit case of Fig.\ref{fig:Sample-Plot_Qubit-SIC.pdf} - is bounded by one. This is simply due to the fact that the plot is restricted to Gaussian states for which even the P-function and the W-function are proper probability distributions. Secondly, we have to note that due to our restriction to Gaussian states we have to be a bit careful when drawing general conclusions about the behavior of Kolmogorov distances between quasi-probability distributions in an entropic transition. Nevertheless, Gaussian states are often understood as prototypical states of CV systems and are widely used in many fields \cite{Giedke2002, A.Ferraro2005, Adesso_PhD, Adesso2014, Link2015}. Thus, any knowledge of their behavior in an entropic transition is certainly of value and interest and might hint to more general statements of further investigations.

\section{Non-Markovianity and CV Systems}
\label{Non-Markovianity and CV-Systems}
\subsection{Witnessing non-Markovianity with P- and Q-functions}
\label{Witnessing non-Markovianity with Q and P-functions}
In this section we will use the distance measures introduced above in order to detect and quantify non-Markovianity, i.e. memory effects in the dynamics of open quantum systems described by a family of dynamical maps $\{\Lambda_t\}$. To this end, we employ the notion of quantum non-Markovianity based on the information flow between the open system and its environment. As explained in the introduction, this information flow is quantified by the trace distance between a pair of quantum states which represents a measure for the distinguishability of these states. However, as already mentioned in section \ref{Distance Measures on Quantum States} the calculation of the trace distance can become very demanding, in particular for CV systems. Though, based on the approximation of the trace distance by the Kolmogorov distances between P- and Q-functions as explained in section \ref{Coherent States as Continuous Quantum Frame}, we can define an easily accessible witness of non-Markovianity in such CV systems. In fact, if we find
\begin{equation}
\dis_\Kol \big(Q_1(t), Q_2(t)\big) > \dis_\Kol\big(P_1(s), P_2(s)\big) \quad {\mbox{for some}} \quad t > s 
\label{NM-witness}
\end{equation}
we conclude from the inequalities (\ref{eq:Q-dis<tr-dis<P-dis}) that the trace distance $\dis_\Trace\big(\Rho_1(\tau), \Rho_2(\tau)\big)$ increases for some $\tau \in [s,t]$, i.e.~non-Markovian effects emerge during this time interval. In a different context a similar strategy has been proposed in \cite{Brugger2022}.

\subsection{Example: Non-Markovian damped oscillator}
\label{Example: Non-Markovian damped oscillator}
We discuss the criterion (\ref{NM-witness}) with the help of the example of a damped harmonic oscillator given by the master equation
\begin{eqnarray} \label{MEQ}
\frac{d}{dt} \Rho(t) \ = -i \omega [\A^\dagger\A, \Rho] \ &+ \ \gamma_-(t)\left( \A\Rho(t)\A^\dagger - \frac{1}{2} \{\A^\dagger \A, \Rho(t) \} \right)\nonumber \\ &+ \ \gamma_+(t)\left( \A^\dagger\Rho(t)\A - \frac{1}{2} \{\A \A^\dagger, \Rho(t) \} \right)
\end{eqnarray}
with constant frequency $\omega$ and time dependent emission and absorption rates $\gamma_-(t)$ and $\gamma_+(t)$, respectively. If $\gamma_\pm(t) \geq 0$ for all $t\geq 0$ the dynamical map is CP divisible and, hence, describes a Markovian dynamics \cite{Breuer2016}. We therefore allow temporarily negative rates setting $\gamma_\pm(t) := \gamma_\pm \sin(\Omega t)$. Transforming the master equation to the interaction picture we thus have
\begin{eqnarray}
\frac{d}{dt} \Rho(t) \ &= \ \gamma_- \sin(\Omega t)\left( \A\Rho(t)\A^\dagger - \frac{1}{2} \{\A^\dagger \A, \Rho(t) \} \right)\nonumber \\ &+ \ \gamma_+ \sin(\Omega t)\left( \A^\dagger\Rho(t)\A - \frac{1}{2} \{\A \A^\dagger, \Rho(t) \} \right).
\label{eq:driven_dumping}
\end{eqnarray}
In the following we assume $\gamma_- \geq \gamma_+$ and set $\gamma_0 := \gamma_- - \gamma_+ \geq 0$. As is shown in appendix \ref{Equations of Motion for Driven Dumping} the master equation (\ref{eq:driven_dumping}) can easily be solved for Gaussian initial states by determining the mean values $\braket{\A}(t)$, $\braket{\A^2}(t)$ and $\braket{\A^\dagger \A}(t)$ which directly yield the time evolution of $\vec{d}(t)$ and $\sigmabm_P(t)$:
\begin{eqnarray}
\vec{d}(t) \ &= \ \vec{d}(0) D(t) \\
\sigmabm_P(t) \ &= \ \sigmabm_P(0) D(t)^2 + \frac{2\gamma_+}{\gamma_0}\big(1-D(t)^2\big) \1 ,
\end{eqnarray}
where we have introduced
\begin{equation} \label{D-t}
D(t) := \e^{-\frac{\gamma_0}{2\Omega}\big( 1 - \cos(\Omega t) \big)}.
\end{equation}
The condition $\gamma_0 \geq 0$ ensures that $D(t)\leq 1$, and we see that $\gamma_\pm/\Omega$ can be understood as a measure for the size of the coupling (for emission and absorption, respectively).

Figure \ref{fig:Witnessing_Non-Markovianity} shows an example of the time evolution of the Kolmogorov distances between P- and Q-functions for the non-Markovian damped oscillator. We can immediately see that the criterion (\ref{NM-witness}) can be applied here to prove non-Markovianity of the dynamics. It can even be used to obtain a lower bound $\mathcal{N}_\mathrm{min}$ for the trace distance based non-Markovianity measure defined by (\ref{NM-measure}). In order to approximate this measure for a time interval $[s,t]$ between a minimum of $\dis_\Kol(P_1,P_2)$ and a following maximum of $\dis_\Kol(Q_1,Q_2)$ we use inequalities (\ref{NM-witness}) to get
\begin{eqnarray}
\mathcal{N}(\Rho_1, \Rho_2, \{\Lambda_\tau| \tau \in [s,t]\})
&\geq& \dis_\Trace\big( \Rho_1(s), \Rho_2(s) \big) - \dis_\Trace\big( \Rho_1(t), \Rho_2(t) \big) \nonumber \\
&\geq& \dis_\Kol\big(Q_1(t),Q_2(t)\big) - \dis_\Kol\big(P_1(s),P_2(s)\big) \nonumber \\
&=:& \mathcal{N}_\mathrm{min}(\Rho_1, \Rho_2, \{\Lambda_\tau| \tau \in [s,t]\}).
\end{eqnarray}
For example, in the case shown in Fig.~\ref{fig:Witnessing_Non-Markovianity} we can use $[s,t] = [0.5 , 1.0]$ which gives $\mathcal{N}(\Rho_1, \Rho_2, \{\Lambda_\tau| \tau \in [0.5,1.0]\}) \ \geq \ 0.876$.

\begin{figure}[htbp]
\center
\includegraphics[width=0.7\textwidth]{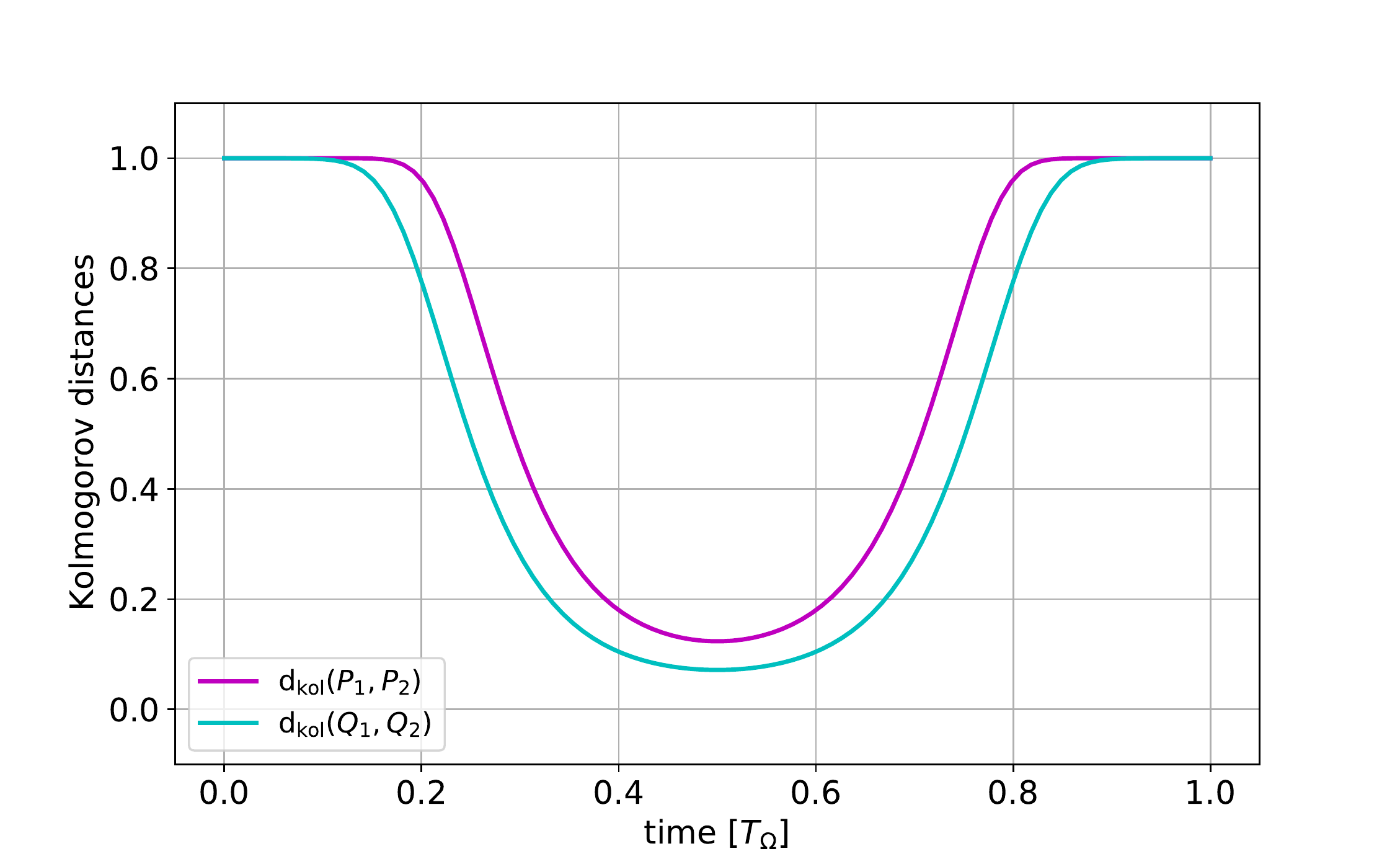}
\caption{\begin{footnotesize}Plot of $\dis_\Kol(P_1(t), P_2(t))$ and $\dis_\Kol(Q_1(t), Q_2(t))$ for the master equation (\ref{eq:driven_dumping}) with $\Omega = 2\pi$, $\gamma_- = 10\pi$ and $\gamma_+ = 2\pi$. The initial states are both Gaussian with $\vec{d}_{1,2} = \pm6$ and $\sigmabm_{1,2} = 2\1$ as displacement and covariance matrix of their P-functions.\end{footnotesize}}
\label{fig:Witnessing_Non-Markovianity}
\end{figure}

\begin{figure}[htbp]
\center
\includegraphics[width=0.8\textwidth]{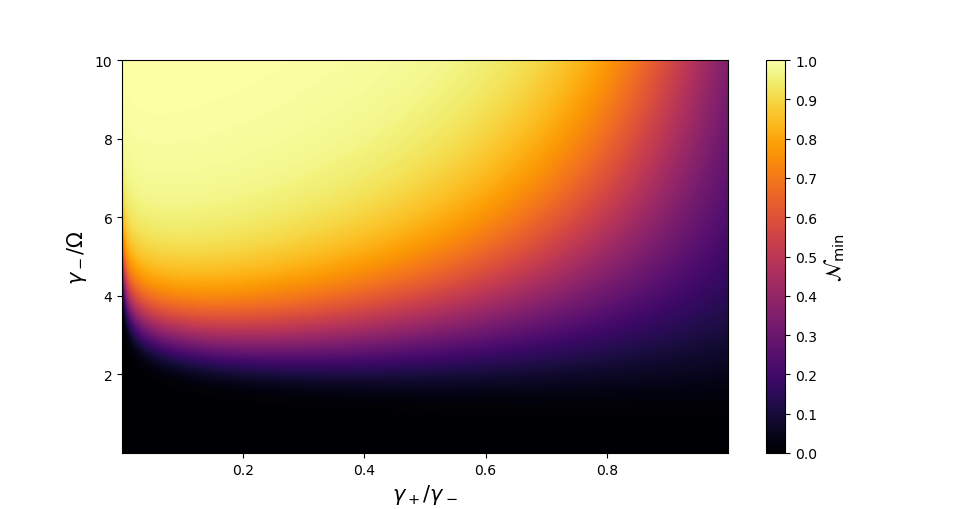}
\caption{\begin{footnotesize}Color plot of the lower bound of the non-Markovianity $\mathcal{N}_\mathrm{min}$ for the dynamics given by the master equation (\ref{eq:driven_dumping}) over varying parameters $\frac{\gamma_+}{\gamma_-}$ and $\frac{\gamma_-}{\Omega}$. The initial states are the same as those used in Fig.~\ref{fig:Witnessing_Non-Markovianity}.
\end{footnotesize}}
\label{fig:non-Markovianity-variation}
\end{figure}

\begin{figure}[h!]
\begin{subfigure}{0.49\columnwidth}
\includegraphics[width=\linewidth]{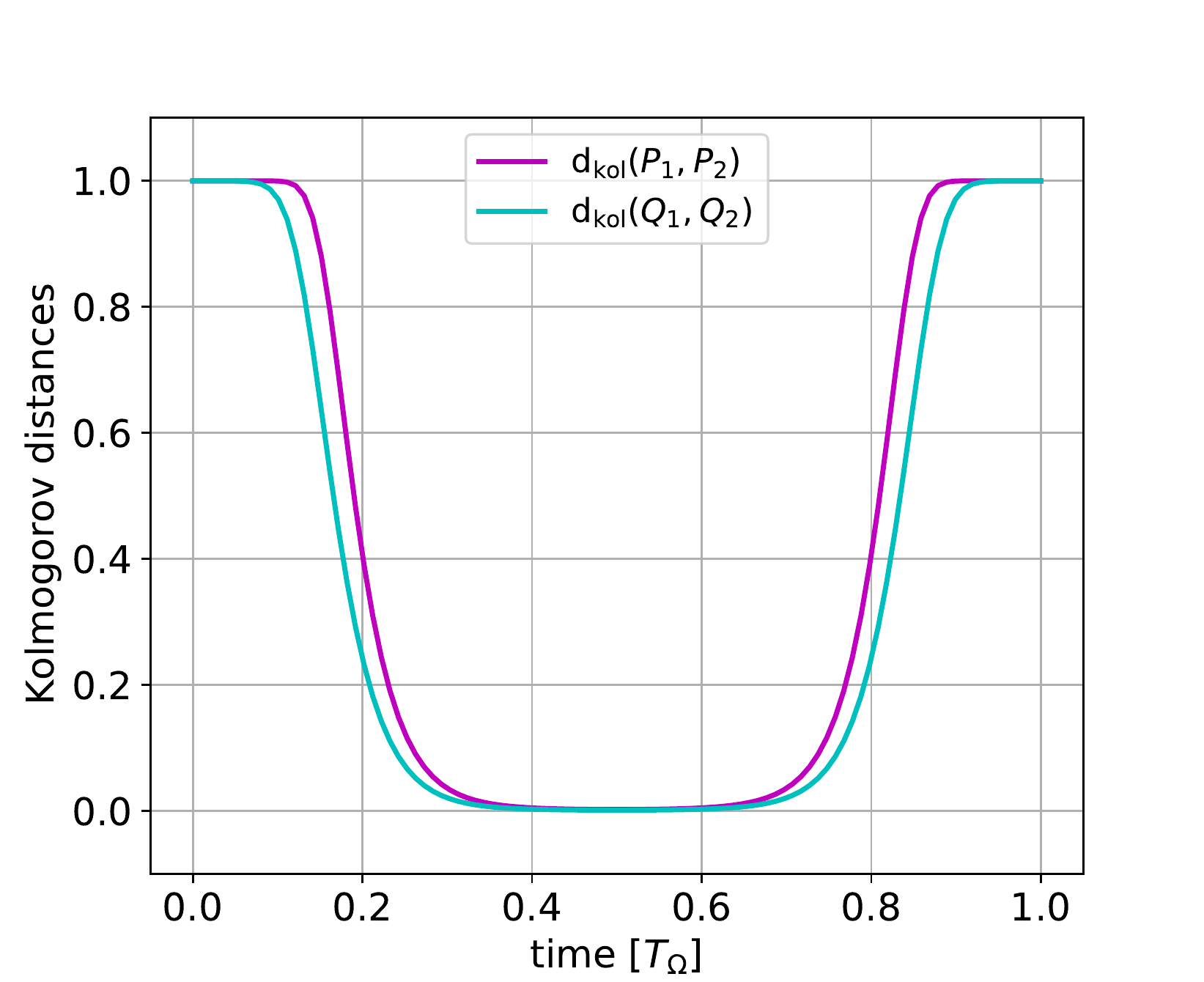}
\caption{$\frac{\gamma_+}{\gamma_-} = 0.2$ and $\frac{\gamma_-}{\Omega}=10$}
\end{subfigure}\hfill
\begin{subfigure}{0.49\columnwidth}
\includegraphics[width=\linewidth]{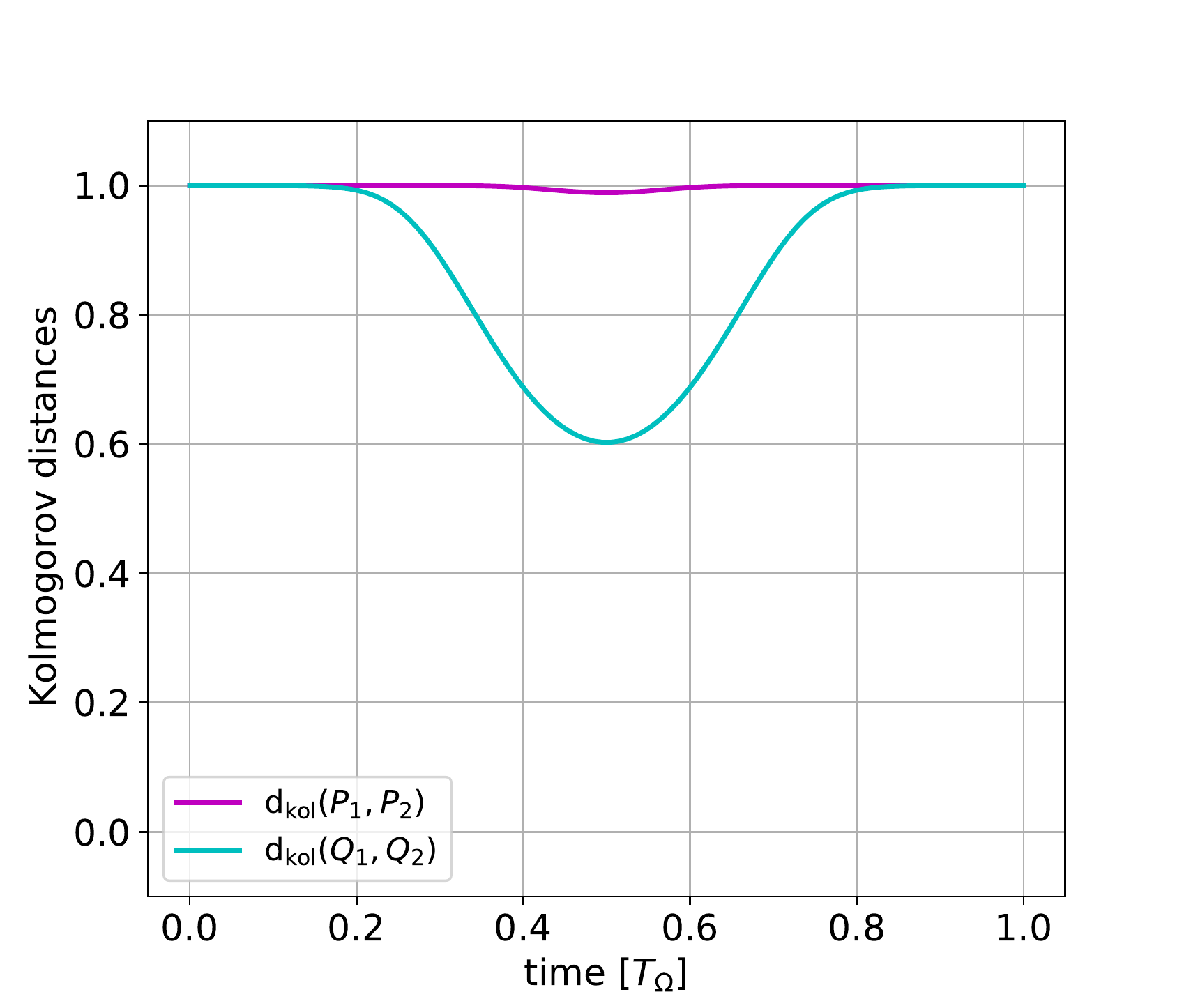}
\caption{$\frac{\gamma_+}{\gamma_-} = 0.05$ and $\frac{\gamma_-}{\Omega}=2$}
\end{subfigure}\\[1em]
\begin{subfigure}{0.49\columnwidth}
\includegraphics[width=\linewidth]{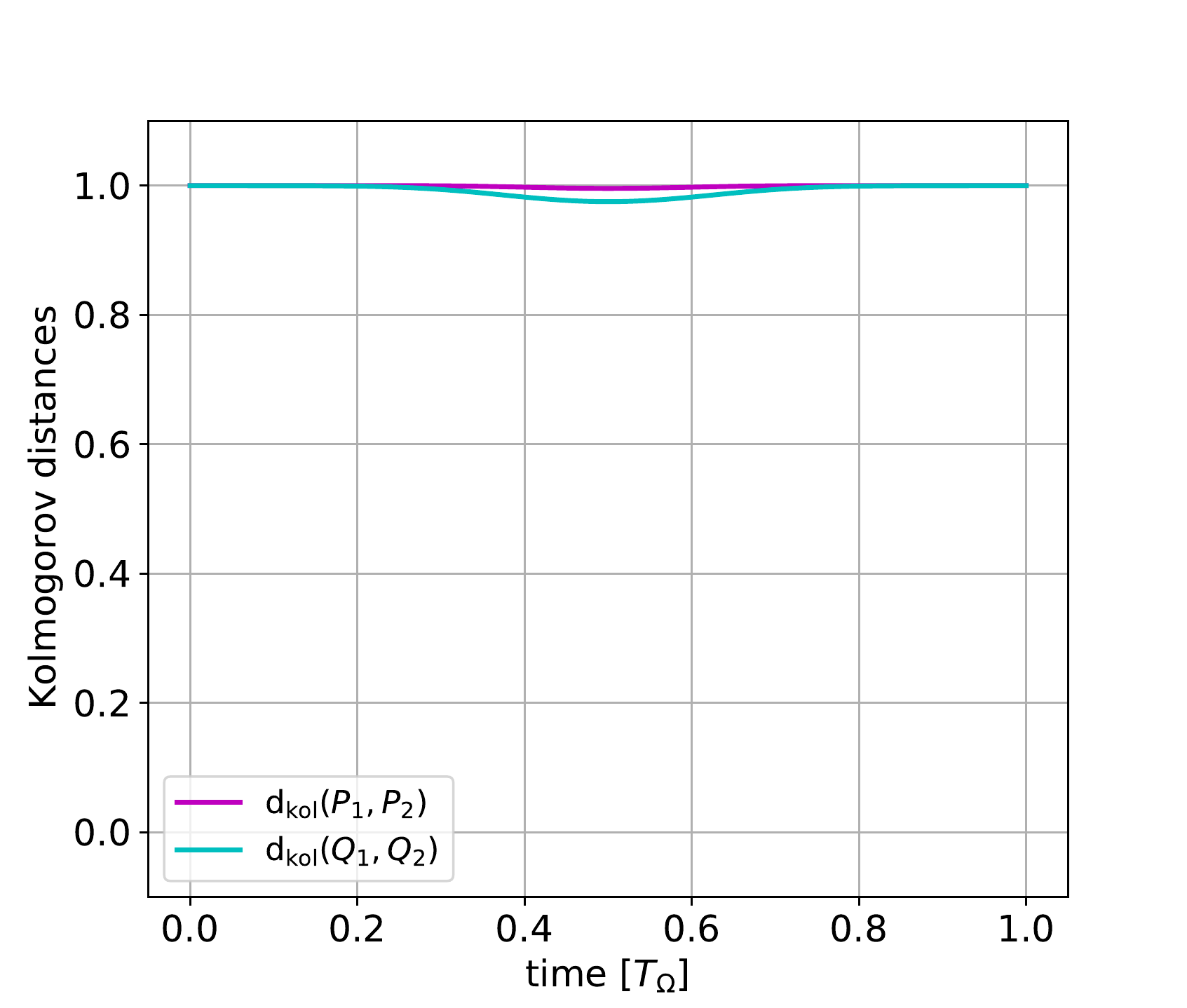}
\caption{$\frac{\gamma_+}{\gamma_-} = 0.5$ and $\frac{\gamma_-}{\Omega}=1$}
\end{subfigure}\hfill
\begin{subfigure}{0.49\columnwidth}
\includegraphics[width=\linewidth]{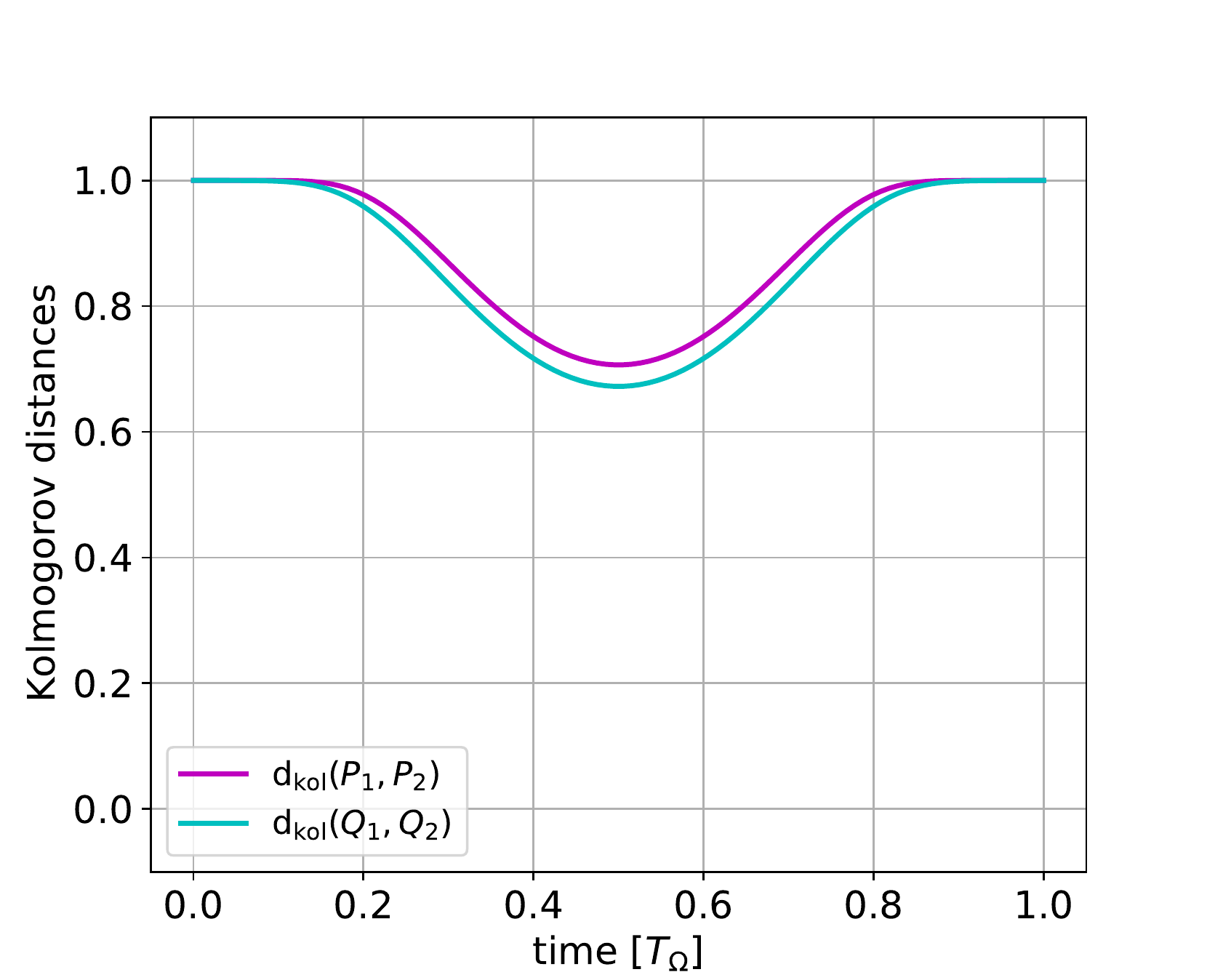}
\caption{$\frac{\gamma_+}{\gamma_-} = 0.8$ and $\frac{\gamma_-}{\Omega}=4$}
\end{subfigure}
\caption{\begin{footnotesize}Plot of the evolution of Kolmogorov distances for different parameter values for the dynamics given by (\ref{eq:driven_dumping}). For all plots the initial states are chosen according to Fig.~\ref{fig:Witnessing_Non-Markovianity} and $\Omega = 2 \pi$ always. Thus, subplot (a) can be found in the upper left corner, subplot (b) in the lower left corner, subplot (c) in the bottom middle and subplot (d) in the right middle part of Fig.~\ref{fig:non-Markovianity-variation} respectively.
\end{footnotesize}}
\label{fig:different_Dynamic-scenarios}
\end{figure}

Figure \ref{fig:non-Markovianity-variation} shows the lower bound $\mathcal{N}_\mathrm{min}$ of the non-Markovianity as a function of the parameters  $\frac{\gamma_+}{\gamma_-}$ and $\frac{\gamma_-}{\Omega}$ for the same initial states as used in Fig.~\ref{fig:Witnessing_Non-Markovianity}. As one can see for $\frac{\gamma_+}{\gamma_-} \to 1$ as well as for $\frac{\gamma_-}{\Omega} \to 0$ the witnessed non-Markovianity $\mathcal{N}_\mathrm{min}$ approaches zero. This can be understood if we rewrite expression (\ref{D-t}) for $D(t)$, which mainly controls the dynamics and the non-Markovianity of the process, as follows
\begin{equation}
D(t) = \exp \left[ -\frac{1}{2} \frac{\gamma_-}{\Omega} \left(1-\frac{\gamma_+}{\gamma_-}\right) \Big(1 - \cos(\Omega t)\Big) \right].
\end{equation}
In Fig.~\ref{fig:different_Dynamic-scenarios} one can find the evolution of the two Kolmogorov distances for different parameters reflecting four different areas in the plot of Fig.~\ref{fig:non-Markovianity-variation}. Subfigure (b) and (c) are both cases in which we can witness almost no non-Markovianity, although for completely different reasons. In the case of (b) the approximation of the trace distance by the two Kolmogorov distances is for the whole time interval rather inaccurate since both Kolmogorov distances differ significantly. Hence, we just cannot \textit{witness} a lot of non-Markovianity although there might be much more. Yet, in the case of (c) the approximation is sufficiently accurate but the dynamics shows just a small amount of non-Markovianity due to the weak coupling to the environment. The maximal non-Markovianity can be witnessed in subplot (a) reflecting the ideal parameter range for witnessing non-Markovianity, while in subplot (d) the witness works sufficiently well, too, yet the dynamics only leads to a medium amount of non-Markovianity.

\subsection{Gaussian Evolution of P-Function Kolmogorov Distance}
\label{Gaussian Evolution of P-Function Kolmogorov Distance}

Gaussian states are of special interest e.g. in quantum optics where they are often seen as a typical class of almost classical states (since for all values of $s$ one finds an actually positive function $W_{\Rho}^s$) as well as in CV quantum information since they are easily experimentally realizable and controllable \cite{Weedbrook2012, Adesso2014}. Accordingly, one often focuses on CPTP dynamical maps which preserve the Gaussian character of the input state and are hence adressed as \textit{Gaussian quantum channels} \cite{Adesso2014, Serafini2017, Grimmer2018}.

For any CPTP dynamical map $\Lambda$ on a CV system we can define its action on P-functions as
\begin{equation}
 \Lambda[P_{\Rho}](\x) =  P_{\Lambda(\Rho)}(\x),
\end{equation}
which yields the explicit representation
\begin{equation}
 P_{\Lambda(\Rho)}(\x) = \int_{\reals^2} d^2y \ \lambda(\x, \vec{y})P_{\Rho}(\vec{y}),
\end{equation}
where 
\begin{equation}
 \lambda(\x,\vec{y}) = P_{\Lambda(\ketbra{\vec{y}}{\vec{y}})}(\x)
\end{equation}
is the P-function of the image under $\Lambda$ of the coherent state $\ketbra{\vec{y}}{\vec{y}}$. The kernel $\lambda$ can be understood as continuous analog of the transition matrix mapping an input distribution to an output distribution for a discrete classical stochastic process \cite{Asmussen2003}. This leads us to the following theorem.

\begin{thm}\label{thm:Gaussian-contraction}
For any pair of quantum states $\Rho_1$ and $\Rho_2$ on a CV quantum system the Kolmogorov distance between their Glauber-Sudarshan P-functions $P_1$ and $P_2$ is contracting under any Gaussian quantum channel.
\end{thm}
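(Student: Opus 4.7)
My plan is to leverage the explicit kernel representation
\begin{equation*}
P_{\Lambda(\Rho)}(\x)=\int_{\reals^{2}} d^{2}y\;\lambda(\x,\vec{y})\,P_{\Rho}(\vec{y}),\qquad \lambda(\x,\vec{y})=P_{\Lambda(\ketbra{\vec{y}}{\vec{y}})}(\x),
\end{equation*}
given immediately before the theorem, and to show that, for a Gaussian channel, $\lambda$ is a genuine classical Markov kernel in its first argument. Once this is established, the contraction of $\dis_\Kol$ follows in one line by pulling the absolute value inside the integral, exactly as for the standard total-variation contraction of a classical stochastic process on phase space.

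Concretely, the work splits into three steps. First, I would identify $\lambda(\x,\vec{y})$ explicitly: using the standard $(X,Y,\vec{\xi})$ parametrization of a Gaussian channel, the state $\Lambda(\ketbra{\vec{y}}{\vec{y}})$ is Gaussian with mean $X\vec{y}+\vec{\xi}$ and covariance $XX^{T}+Y$, so $\lambda(\cdot,\vec{y})$ is Gaussian with ``P-covariance'' $Y_{P}:=XX^{T}+Y-\1$. Second, I would verify the Markov-kernel properties: the normalization $\int d^{2}x\,\lambda(\x,\vec{y})=\Tr{\Lambda(\ketbra{\vec{y}}{\vec{y}})}=1$ is immediate from trace preservation, and the non-negativity $\lambda(\x,\vec{y})\geq 0$ reduces to the covariance condition $Y_{P}\geq 0$ (so that $\lambda(\cdot,\vec{y})$ is a bona-fide Gaussian density on $\x$). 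Third, the triangle inequality applied inside the kernel yields
\begin{equation*}
\dis_\Kol\bigl(P_{\Lambda(\Rho_{1})},P_{\Lambda(\Rho_{2})}\bigr)\;\leq\;\int d^{2}y\;\tfrac{1}{2}\bigl|P_{1}(\vec{y})-P_{2}(\vec{y})\bigr|\int d^{2}x\;\lambda(\x,\vec{y})\;=\;\dis_\Kol(P_{1},P_{2}),
\end{equation*}
which is the claim.

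The step I expect to be the main obstacle is the non-negativity condition $Y_{P}\geq 0$. The abstract complete-positivity requirement for a Gaussian channel, $Y+i(\Omegabm-X\Omegabm X^{T})\geq 0$, is strictly weaker: a pure squeezing map, for instance, has $Y=0$ with $X$ symplectic but non-trivial, so $XX^{T}-\1$ is indefinite and the image of a coherent state has only a distributional P-function. I would therefore either (i) restrict implicitly to the subclass of Gaussian channels that send coherent states to classical mixtures of coherent states (the ``Gaussian classical-noise'' channels, for which $Y_{P}\geq 0$ holds by construction), or (ii) invoke a canonical decomposition of a general Gaussian channel into a symplectic Gaussian unitary part, which acts as a measure-preserving change of phase-space coordinates and therefore leaves $\dis_\Kol$ invariant wherever it preserves the regularity of the P-function, composed with a classical Gaussian noise addition, which contracts $\dis_\Kol$ by the one-line argument above. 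Verifying this decomposition, and in particular checking that all intermediate P-functions remain well-defined, is the point that I expect to require the most care.
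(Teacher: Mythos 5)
Your core argument is exactly the paper's: $\lambda(\x,\vec y)=P_{\Lambda(\ketbra{\vec y}{\vec y})}(\x)$ is treated as a classical Markov kernel, the triangle inequality is pulled inside the double integral, and normalization of the kernel (trace preservation) gives the contraction in one line. Where you differ is that you refuse to take the positivity of $\lambda$ for granted, and you are right to hesitate: the paper disposes of this step by asserting that $\Lambda(\ketbra{\vec y}{\vec y})$ is a Gaussian state and ``therefore'' has a positive P-function, but a Gaussian state need not have a positive (or even regular) P-function --- a squeezed coherent state is the standard counterexample, and it is produced from a coherent input by a perfectly legitimate Gaussian unitary channel. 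In the $(X,Y,\vec\xi)$ language you use, positivity of the kernel requires $XX^{T}+Y-\1\geq 0$, which is strictly stronger than complete positivity, $Y+i(\Omegabm-X\Omegabm X^{T})\geq 0$. So your observation identifies a real restriction on the theorem's scope rather than a removable technicality. Of your two proposed repairs, (i) is the honest one: under the hypothesis $XX^{T}+Y\geq\1$ the kernel is a genuine Gaussian transition density and your three-step argument closes completely. Repair (ii) does not go through as described: a Gaussian unitary with non-orthogonal symplectic matrix $S$ does \emph{not} act on P-functions as a measure-preserving change of coordinates. The Wigner function transforms covariantly, but the P-covariance transforms as $\sigmabm_P\mapsto S\sigmabm_P S^{T}+(SS^{T}-\1)$, and the indefinite offset $SS^{T}-\1$ amounts to a convolution in one phase-space direction and a deconvolution in the other; deconvolution can increase a total-variation distance, so the unitary factor cannot simply be discarded as $\dis_\Kol$-invariant. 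If you want the statement for all Gaussian channels you would have to either reformulate it at the level of a quasi-probability distribution that does transform covariantly under the full symplectic group, or accept the restricted channel class of option (i).
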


\begin{proof}
Let $P' = \Lambda[P]$ be the image of $P$ under a Gaussian quantum channel $\Lambda$. We find
\begin{eqnarray}
\dis_\mathrm{kol}(P'_1, P'_2) &=& \frac{1}{2} \int_{\reals^2} d^2x \left| \int_{\reals^2} d^2y \  \lambda(\x, \vec{y})  (P_1(\vec{y}) - P_2(\vec{y})) \right|
\\
&\leq& \ \frac{1}{2} \int_{\reals^2} d^2x \int_{\reals^2} d^2y \  |\lambda(\x, \vec{y})| \cdot |P_1(\vec{y}) - P_2(\vec{y})|
\\
&=& \ \frac{1}{2} \int_{\reals^2} d^2x \int_{\reals^2} d^2y \  \lambda(\x, \vec{y})  |P_1(\vec{y}) - P_2(\vec{y})|
\\
&=& \ \frac{1}{2} \int_{\reals^2} d^2y \ \ubs{= 1}{\left(\int_{\reals^2} d^2x \  P_{\Lambda(\ketbra{\vec{y}}{\vec{y}})}(\x)\right)}  |P_1(\vec{y}) - P_2(\vec{y})|
\\
&=& \ \dis_\mathrm{kol}(P_1, P_2),
\end{eqnarray}
where we have used the fact that coherent states are Gaussian and, hence, $\lambda(\x,\vec{y})=P_{\Lambda(\ketbra{\vec{y}}{\vec{y}})}(\x)$ has to be again a Gaussian state for a Gaussian channel $\Lambda$, and is therefore a positive function. 
\end{proof}

This theorem has several interesting consequences. First, it implies that any increase of the Kolmogorov distance between two P-functions is due to a violation of the CPTP property or to a violation of Gaussianity preservation. Let us first suppose we have a family of dynamical maps $\{\Lambda_t\}$ which is known to preserve Gaussianity for all $t$. We can then use the Kolmogorov distance of the P-functions instead of the trace distance in the definition of non-Markovianity \cite{Breuer2009b, Breuer2016} which leads to the novel non-Markvianity measure
\begin{eqnarray}
\mathcal{N}_P(\Rho_1, \Rho_2, \{\Lambda_t\}) := \underset{\delta \geq 0}{\int} dt \ \delta(\Rho_1, \Rho_2, t)
\\
\with \quad \delta(\Rho_1, \Rho_2, t) := \frac{d}{dt} \dis_\Kol\big(\Lambda_t[P_1], \Lambda_t[P_2] \big).
\end{eqnarray}
This quantity measures the backflow of information for a certain pair of quantum states by means of the Kolmogorov distance of the corresponding P-functions $P_1$ and $P_2$. By maximizing over all possible pairs of initial states we obtain
\begin{eqnarray}
\mathcal{N}_P(\{ \Lambda_t \}) \ := \ \underset{\Rho_1, \Rho_2}{\mathrm{max}} \ 
\mathcal{N}_P(\Rho_1, \Rho_2, \{ \Lambda_t \}).
\end{eqnarray}
This is a measure of the non-Markovianity of a given family of Gaussian dynamical maps $\{ \Lambda_t \}$ which is analogous to the trace distance based measure defined by Eqs.~(\ref{NM-measure})-(\ref{NM-measure-max}), but which is significantly easier to use in the case of CV systems. An example is given by the non-Markovian damped oscillator discussed in the previous section \ref{Example: Non-Markovian damped oscillator} which is obviously described by a Gaussianity preserving family of dynamical maps. We emphasize again that this measure only requires the dynamical maps $\Lambda_t$ to be Gaussianity preserving, while the initial states are allowed to be arbitrary although we chose them to be Gaussian in Fig.~\ref{fig:Witnessing_Non-Markovianity} for reasons of computational simplicity. A further example is the Caldeira-Leggett model of quantum Brownian motion \cite{Caldeira1983,Grabert1988}, which may be treated analogously to the procedure used in \cite{Einsiedler2020}.

Finally we suggest another application of theorem \ref{thm:Gaussian-contraction}. If we know for sure that the dynamical map is CP divisible (see, e.g.~\cite{Breuer2016}) any increase of the Kolmogorov distance between the P-functions provides a witness for the non-Gaussianity of the dynamical map. Note that this criterion of the increase of the Kolmogorov distance only represents a sufficient condition for non-Gaussianity since we might think of a CPTP map which maps a coherent state to a non-Gaussian state with nevertheless positive P function. Since the proof relies only on the positivity of $\lambda(\vec{y}, \x)$ and not on its Gaussianity, this would result in a contraction of the Kolmogorov distance as well.

\section{Conclusion}
\label{Conclusion}
In this paper we used a vector like representations of quantum states by (quasi) probability distributions based on quantum frames and their induced IC-POVMs. The first we called the frame vector $\vec{f}$ and the latter the IC-POVM probability vector $\vec{p}$ of a quantum state $\Rho$. To distinguish two quantum states $\Rho_1$ and $\Rho_2$ the Kolmogorov distance (as a distance measure for probability distributions) was applied to both $\vec{f}_{1,2}$ and $\vec{p}_{1,2}$ to find an upper bound (by $\dis_\Kol(\vec{f}_1, \vec{f}_2)$) and a lower bound (by $\dis_\Kol(\vec{p}_1, \vec{p}_2)$) of the trace distance between $\Rho_1$ and $\Rho_2$.

Having discussed these bounds for the case of a SIC-POVM and its symmetric quantum frame on a qubit, we applied this idea to the continuous quantum frame given by the coherent states in a continuous variable (CV) system. We saw that the frame vector $\vec{f}$ has to be replaced by the Glauber-Sudarshan P-function and the probability vector $\vec{p}$ by the Husimi Q-function - both two widely used quasi-probability distributions for quantum states. As for the discrete case an inequality chain holds for these continuous quasi-probability distributions, providing again upper and lower bounds for the trace distance. We illustrated the performance of these bounds for a class of randomly generated Gaussian states and demonstrated that the bounds become steadily tighter for high entropy states. Thus, for large-entropy quantum states one can approximate successfully their trace distance by the Kolmogorov distances between their P- and Q-functions. This is of particular interest since the numerical computation of the trace distance of quantum states becomes highly demanding in the regime of large entropies, while on the other hand an efficient determination of the trace distance is relatively easy for low entropy states in a suitable basis of the state space. Thus, the method proposed here is in a certain sense complementary to the direct determination of the trace distance by diagonalization in low-dimensional Hilbert spaces.

Building on these result we constructed a witness for the non-Markovianity of the dynamics of open quantum systems which is based on the Kolmogorov distances between P- and Q-functions and applied it to the model of a non-Markovian oscillator. The witness was also shown to lead to a lower bound for the non-Markovianity measure based on the trace distance which is again particularly efficient in the regime of high entropy states. Furthermore, we showed that the Kolmogorov distance between P-functions has to contract under any Gaussian CPTP map (Gaussian quantum channel). Based on this we derived a proper non-Markovianity measure for Gaussian quantum dynamical maps, i.e. maps that preserve the Gaussianity of quantum states. Finally, we suggested another application of the contraction property, namely to use it as a witness for non-Gaussianity in case one knows for sure the dynamical map to be Markovian.

For future work we suggest to apply the non-Markovianity measure for Gaussian dynamics to more complex systems, e.g.~the Caldeira-Leggett model of quantum Brownian motion, or to the dynamics of open systems described by more realistic non-Markovian quantum master equations. Beyond the regime of Gaussian dynamics one can of course still use our non-Markovianity witness, which might be a possible approach to investigate non-Markovian interacting many-body systems. To this end, it will be helpful to find a more detailed characterization of the scenarios in which this witness works especially efficient and to investigate if the observed quality of the approximation for high entropy quantum states actually holds in more general situations.

\ack
We thank Jonathan Brugger, Christoph Dittel, Andreas Buchleitner and Tanja Schilling for fruitful discussions. This work has been supported by the German Research Foundation (DFG) through FOR 5099.

\appendix
\section*{Appendix}
\setcounter{section}{0}
\renewcommand{\thesection}{\Alph{section}}

\section{Proof of Eq. (\ref{eq:p-dis<tr-dis<f-dis})}
\label{Proof of Eq. (eq:p-dis<tr-dis<f-dis)}
We start with the first inequality which states that the Kolmogorov distance between two IC-POVM probability vectors is smaller or equal to the trace distance between the quantum states at hand. Introducing the spectral decomposition of $\Deltabm := \Rho_1 - \Rho_2 =  \sum_a \Delta_a \ketbra{a}{a}$ we obtain
\begin{eqnarray}
\dis_\mathrm{kol.}(\vec{p}_1, \vec{p}_2) &= \frac{1}{2} \sum_i |p_i^1 - p_i^2| = \frac{1}{2} \sum_i |\Tr{\E_i \Rho_1} - \Tr{\E_i \Rho_2}|
\\
&= \frac{1}{2} \sum_i |\Tr{\E_i \Deltabm}| = \frac{1}{2} \sum_i |\Tr{\sum_a \Delta_a \E_i \ketbra{a}{a}}|
\\
&\leq \ \frac{1}{2} \sum_{i, a} |\Delta_a \ubs{\geq 0}{\Tr{\E_i \ketbra{a}{a}}}| = \frac{1}{2} \sum_{i, a} |\Delta_a| \ \Tr{\E_i \ketbra{a}{a}}
\\
&= \frac{1}{2} \sum_a |\Delta_a| \ \ubs{= \Tr{\1 \ketbra{a}{a}} = 1}{\Tr{\sum_i \E_i \ketbra{a}{a}}} = \frac{1}{2} \ubs{= \Trace |\Deltabm|}{\sum_a |\Delta_a|}
\\
&= \frac{1}{2} \Trace|\Rho_1 - \Rho_2| = \dis_\Trace(\Rho_1, \Rho_2).
\end{eqnarray}
Note that we actually do not need $\{ \E_i\}$ to be an IC-POVM associated to a quantum frame. In fact, the first inequality is actually true for any POVM.

For the second inequality stating that the trace distance between two quantum states is smaller or equal to the Kolmogorov distance between their quantum frame vectors we use $\Rho_1 - \Rho_2 = \sum_i (f_i^1 - f_i^2) \ketbra{\psi_i}{\psi_i}$ to get
\begin{eqnarray}
&\dis_\Trace (\Rho_1, \Rho_2) = \frac{1}{2} ||\Rho_1 - \Rho_2|| 
= \frac{1}{2} || \sum_i (f_i^1 - f_i^2) \ketbra{\psi_i}{\psi_i} ||
\\
&\leq \frac{1}{2} \sum_i || (f_i^1 - f_i^2) \ketbra{\psi_i}{\psi_i} || 
= \frac{1}{2} \sum_i |f_i^1 - f_i^2| \cdot ||\ketbra{\psi_i}{\psi_i}||
\\
&= \frac{1}{2} \sum_i |f_i^1 - f_i^2| = \dis_\Kol(\vec{f}_1, \vec{f}_2),
\end{eqnarray}
where we have used that the trace norm satisfies the triangular inequality and that the trace norm of a rank-one projection is equal to one. \hfill $\quad \square$

\section{Proof of Eq. (\ref{eq:Q-dis<tr-dis<P-dis})}
\label{Proof of Eq. (eq:Q-dis<tr-dis<P-dis)}
The proof runs similar to the one of appendix \ref{Proof of Eq. (eq:p-dis<tr-dis<f-dis)}. Based on the completeness relation \cite{Serafini2017}
\begin{eqnarray}
\frac{1}{\pi} \int d^2\alpha \  \ketbra{\alpha}{\alpha} = \1
\end{eqnarray}
we define for simplicity $\E(\alpha) = \frac{1}{\pi} \ketbra{\alpha}{\alpha}$, i.e.~we have $Q_{\Rho}(\alpha) = \Tr{\E(\alpha) \Rho}$, as well as again $\Deltabm := \Rho_1 - \Rho_2 =  \sum_i \Delta_i \ketbra{i}{i}$. The first inequality is then obtained as follows,
\begin{eqnarray}
&\dis_\mathrm{kol}(Q_1, Q_2) = \frac{1}{2} \int d^2\alpha \ |Q_1(\alpha) - Q_2(\alpha)| = \frac{1}{2} \int d^2\alpha \ |\Tr{\E(\alpha)\Deltabm}| 
\\
&= \frac{1}{2} \int d^2\alpha \ |\sum_i \Delta_i \ubs{\geq 0}{\Tr{\E(\alpha) \ketbra{i}{i}}}| \leq \frac{1}{2} \int d^2\alpha \ \sum_i |\Delta_i| \Tr{\E(\alpha) \ketbra{i}{i}}
\\
&= \frac{1}{2} \sum_i |\Delta_i| \Tr{\int d^2\alpha \E(\alpha) \ketbra{i}{i}} = \frac{1}{2} \sum_i |\Delta_i| \ubs{=1}{\Tr{\1 \ketbra{i}{i}}} \\
&= \dis_\mathrm{tr}(\Rho_1, \Rho_2).
\end{eqnarray}

To prove the second inequality we use $\Rho_1 - \Rho_2 = \int d^2\alpha \ \big(P_1(\alpha) - P_2(\alpha)\big) \ketbra{\alpha}{\alpha}$ which yields
\begin{eqnarray}
&\dis_\mathrm{tr}(\Rho_1, \Rho_2) = \frac{1}{2} ||\Rho_1 - \Rho_2|| 
= \frac{1}{2} || \int d^2\alpha \ \big(P_1(\alpha) - P_2(\alpha)\big) \ketbra{\alpha}{\alpha} ||
\\
&\leq \frac{1}{2} \int d^2 \alpha \ |P_1(\alpha) - P_2(\alpha)| \cdot  || \ketbra{\alpha}{\alpha} || 
= \dis_\mathrm{kol}(P_1, P_2),
\end{eqnarray}
where we have again used properties of the trace norm. \hfill $\quad \square$

\section{Solution of the master equation}
\label{Equations of Motion for Driven Dumping}
The master equation is given by
\begin{eqnarray}
\frac{d}{dt} \Rho(t) \ &= \ \gamma_- \sin(\Omega t)\big( \A\Rho(t)\A^\dagger - \frac{1}{2} \{\A^\dagger \A, \Rho(t) \} \big)
\nonumber \\
&+ \ \gamma_+ \sin(\Omega t)\big( \A^\dagger\Rho(t)\A - \frac{1}{2} \{\A \A^\dagger, \Rho(t) \} \big),
\end{eqnarray}
and for any time independent operator $\hat{o}$ we have
\begin{eqnarray}
\frac{d}{dt} \braket{\hat{o}} = \frac{d}{dt}\Tr{\hat{o}\Rho} = \Tr{\hat{o}\frac{d}{dt}\Rho}.
\label{eq:Motion-O}
\end{eqnarray}
Furthermore, we know that for Gaussian states the Wigner function  $W(q,p)$ over the phase space with is defined by its displacement vector and covariance matrix
\begin{eqnarray}
d_i &= \braket{\bm{X}_i} \quad \und \quad \sigma_{ij} &= \braket{\bm{X}_i \bm{X}_j + \bm{X}_j \bm{X}_i} - 2\braket{\bm{X}_i}\braket{\bm{X}_j}
\end{eqnarray}
with $\X = (\q, \p)^T = \frac{1}{\sqrt{2}}\left(\A + \A^\dagger, -i (\A - \A^\dagger)\right)^T$. Accordingly we find
\begin{eqnarray}
d_0 &= \frac{1}{\sqrt{2}} \left(\braket{\A} + \braket{\A^\dagger}\right)
\quad \und \quad
d_1 &= \frac{-i}{\sqrt{2}} \left(\braket{\A} - \braket{\A^\dagger}\right)
\end{eqnarray}
as well as
\begin{eqnarray}
\sigma_{00} &= \var{\A} + \var{\A^\dagger} + 2\left( \braket{\A^\dagger \A} - \braket{\A^\dagger}\braket{\A} \right) + 1
\\
\sigma_{11} &= -\var{\A} - \var{\A^\dagger} + 2\left( \braket{\A^\dagger \A} - \braket{\A^\dagger}\braket{\A} \right) + 1
\\
\sigma_{01} &= \sigma_{10} = -i\left( \var{\A} - \var{\A^\dagger} \right)
\end{eqnarray}
with $\var{\hat{o}} = \braket{\hat{o}^2} - \braket{\hat{o}}^2$. Using (\ref{eq:Motion-O}) we now find
\begin{eqnarray}
\frac{d}{dt} \braket{\A}(t) &=& \gamma_-(t)\Tr{\A\A\Rho\A^\dagger - \frac{1}{2}(\A\A^\dagger\A\Rho + \A\Rho\A\A^\dagger)}
\nonumber \\
&+& \gamma_+(t)\Tr{\A\A^\dagger\Rho\A - \frac{1}{2}(\A\A\A^\dagger\Rho + \A\Rho\A\A^\dagger)} \nonumber \\
&=& \ \gamma_-(t)\Tr{\A^\dagger\A\A\Rho - \frac{1}{2}\A\A^\dagger\A\Rho - \frac{1}{2}\A^\dagger\A\A\Rho}
\nonumber \\
&+& \ \gamma_+(t)\Tr{\A\A\A^\dagger\Rho - \frac{1}{2}\A\A\A^\dagger\Rho - \frac{1}{2}\A\A^\dagger\A\Rho} \nonumber \\
&=& \ \frac{\gamma_-(t)}{2} \Tr{-[\A,\A^\dagger]\A\Rho} + \frac{\gamma_+(t)}{2}\Tr{\A[\A,\A^\dagger]\Rho} \nonumber \\
&=& \ -\frac{\gamma_-(t)-\gamma_+(t)}{2}\braket{\A}(t)
\end{eqnarray}
with $\gamma_\pm(t) = \gamma_\pm \sin(\Omega t)$ and in the last step we have used $[\A,\A^\dagger] = \1$. Similar calculations also lead to
\begin{eqnarray}
\frac{d}{dt} \braket{\A^\dagger}(t) &= -\frac{\gamma_-(t)-\gamma_+(t)}{2}\braket{\A^\dagger}
\\
\frac{d}{dt} \braket{\A^2}(t) &= -(\gamma_-(t)-\gamma_+(t))\braket{\A^2}
\\
\frac{d}{dt} \braket{(\A^\dagger)^2}(t) &= -(\gamma_-(t)-\gamma_+(t))\braket{(\A^\dagger)^2}
\end{eqnarray}
To obtain $\braket{\A^\dagger \A}$ we proceed as follows,
\begin{eqnarray}
\hspace{-18mm}
\frac{d}{dt}\braket{\A^\dagger \A} &=& \gamma_-(t) \Tr{\A^\dagger\A\A\Rho\A - \frac{1}{2}\A^\dagger\A\A^\dagger\A\Rho - \frac{1}{2}\A^\dagger\A\Rho\A^\dagger\A}
\nonumber \\
&+& \gamma_+(t) \Tr{\A^\dagger\A\A^\dagger\Rho\A - \frac{1}{2}\A^\dagger\A\A\A^\dagger\Rho - \frac{1}{2}\A^\dagger\A\Rho\A\A^\dagger}
\nonumber \\
&=& \ \gamma_-(t) \Tr{\A^\dagger \ [\A^\dagger,\A] \ \A\Rho}
\nonumber \\
&+& \ \gamma_+(t) \Tr{\A\A^\dagger\A\A^\dagger \Rho - \frac{1}{2}(\A\A^\dagger + [\A^\dagger,\A])\A\A^\dagger\Rho 
 - \frac{1}{2}\A\A^\dagger(\A\A^\dagger + [\A^\dagger,\A])\Rho}
\nonumber \\
&=& \ -\gamma_-(t)\Tr{\A^\dagger\A \Rho} + \gamma_+(t)\Tr{\A^\dagger\A\Rho + [\A,\A^\dagger]\Rho}
\nonumber \\
&=& \ -\gamma_-(t)\braket{\A^\dagger\A} + \gamma_+(t)\big(\braket{\A^\dagger\A} + 1\big).
\end{eqnarray}
With $\gamma_0 = \gamma_- - \gamma_+$ and $D(t) = \e^{\frac{\gamma_0}{\Omega}\big(\cos(\Omega t)-1\big)}$ these differential equations get solved by
\begin{eqnarray}
\braket{\A}(t) &= \braket{\A}_0 D(t)
\\
\braket{\A^\dagger}(t) &= \braket{\A^\dagger}_0 D(t)
\\
\braket{\A^2}(t) &= \braket{\A^2}_0 D(t)^2
\\
\braket{(\A^\dagger)^2}(t) &= \braket{(\A^\dagger)^2}_0 D(t)^2
\\
\braket{\A^\dagger\A}(t) &= \left(\braket{\A^\dagger\A}_0 - \frac{\gamma_+}{\gamma_0}\right) D(t)^2 + \frac{\gamma_+}{\gamma_0},
\end{eqnarray}
which finally yields
\begin{eqnarray}
\vec{d}(t) &= \vec{d}(0) D(t)
\\
\sigmabm(t) &= \sigmabm(0) D(t)^2 + \frac{2\gamma_+}{\gamma_0}\big(1-D(t)^2\big)\1 + \1.
\end{eqnarray}
Since $\sigmabm = \sigmabm_0$ is the covariance matrix of the Wigner function and $\sigmabm_P = \sigmabm_{1} = \sigmabm_{0} - \1$ one finally gets the covariance matrix of the P-function
\begin{eqnarray}
\sigmabm_P(t) &= \sigmabm_P(0) D(t)^2 + \frac{2\gamma_+}{\gamma_0}\big(1-D(t)^2\big)\1,
\end{eqnarray}
while the displacement is the same for Wigner function and P-function.
\newline
\newline

\def\newblock{\ }
\bibliography{Bibliography}

\providecommand{\noopsort}[1]{}\providecommand{\singleletter}[1]{#1}%
\begin{thebibliography}{10}

\bibitem{Breuer2007}
Heinz-Peter Breuer and Francesco Petruccione.
\newblock {\em {The Theory of Open Quantum Systems}}.
\newblock Oxford University Press, 01 2007.

\bibitem{Rivas2014a}
A.~Rivas, S.~F. Huelga, and M.~B. Plenio.
\newblock Quantum non-{M}arkovianity: Characterization, quantification and
  detection.
\newblock {\em Rep. Prog. Phys.}, 77:094001, 2014.

\bibitem{Breuer2016}
Heinz-Peter Breuer, Elsi-Mari Laine, Jyrki Piilo, and Bassano Vacchini.
\newblock Colloquium: Non-markovian dynamics in open quantum systems.
\newblock {\em Rev. Mod. Phys.}, 88:021002, Apr 2016.

\bibitem{deVega2017}
In\'es de~Vega and Daniel Alonso.
\newblock Dynamics of non-markovian open quantum systems.
\newblock {\em Rev. Mod. Phys.}, 89:015001, Jan 2017.

\bibitem{Breuer2009b}
Heinz-Peter Breuer, Elsi-Mari Laine, and Jyrki Piilo.
\newblock Measure for the degree of non-{M}arkovian behavior of quantum
  processes in open systems.
\newblock {\em Phys. Rev. Lett.}, 103:210401, 2009.

\bibitem{Nielsen2000}
Michael~A. Nielsen and Isaac~L. Chuang.
\newblock {\em Quantum computation and quantum information}.
\newblock Cambridge University Press, Cambridge [u.a.], 10th anniversary ed.
  edition, 2010.

\bibitem{Hayashi2006}
Masahito Hayashi.
\newblock {\em Quantum Information}.
\newblock Springer, Berlin, 2006.

\bibitem{Megier2021}
Nina Megier, Andrea Smirne, and Bassano Vacchini.
\newblock Entropic bounds on information backflow.
\newblock {\em Phys. Rev. Lett.}, 127:030401, Jul 2021.

\bibitem{Settimo2022}
Federico Settimo, Heinz-Peter Breuer, and Bassano Vacchini.
\newblock Entropic and trace distance based measures of non-markovianity.
\newblock {\em arXiv:2207.13183}, 2022.

\bibitem{Ruskai1994a}
M.~B. Ruskai.
\newblock Beyond strong subadditivity? {I}mproved bounds on the contraction of
  generalized relative entropy.
\newblock {\em Rev. Math. Phys.}, 6:1147, 1994.

\bibitem{Laine2010a}
Elsi-Mari Laine, Jyrki Piilo, and Heinz-Peter Breuer.
\newblock Measure for the non-{M}arkovianity of quantum processes.
\newblock {\em Phys. Rev. A}, 81:062115, 2010.

\bibitem{Renes2004-PhD}
Joseph~M. Renes.
\newblock {\em Frames, Designs, and Spherical Codes inQuantum Information
  Theory}.
\newblock PhD thesis, The University of New MexicoAlbuquerque, New Mexico,
  2004.

\bibitem{Renes2004}
Joseph~M. Renes, Robin Blume-Kohout, A.~J. Scott, and Carlton~M. Caves.
\newblock Symmetric informationally complete quantum measurements.
\newblock {\em Journal of Mathematical Physics}, 45(6):2171--2180, 2004.

\bibitem{Bian2015}
Zhihao Bian, Jian Li, Hao Qin, Xiang Zhan, Rong Zhang, Barry~C. Sanders, and
  Peng Xue.
\newblock Realization of single-qubit positive-operator-valued measurement via
  a one-dimensional photonic quantum walk.
\newblock {\em Phys. Rev. Lett.}, 114:203602, May 2015.

\bibitem{Bent2015}
N.~Bent, H.~Qassim, A.~A. Tahir, D.~Sych, G.~Leuchs, L.~L. S\'anchez-Soto,
  E.~Karimi, and R.~W. Boyd.
\newblock Experimental realization of quantum tomography of photonic qudits via
  symmetric informationally complete positive operator-valued measures.
\newblock {\em Phys. Rev. X}, 5:041006, Oct 2015.

\bibitem{Serafini2017}
Alessio Serafini.
\newblock {\em Quantum continuous variables : a primer of theoretical methods}.
\newblock Boca Raton, 2017.

\bibitem{Adesso2014}
Gerardo Adesso, Sammy Ragy, and Antony~R. Lee.
\newblock Continuous variable quantum information: Gaussian states and beyond.
\newblock {\em Open Systems \&amp; Information Dynamics}, 21(01n02):1440001,
  2014.

\bibitem{Kovacevic2008}
Jelena Kova\v{c}evi\'{c} and Amina Chebira.
\newblock An introduction to frames.
\newblock {\em Foundations and Trends® in Signal Processing}, 2(1):1--94,
  2008.

\bibitem{Kiktenko2020}
E.~O. Kiktenko, A.~O. Malyshev, A.~S. Mastiukova, V.~I. Man'ko, A.~K. Fedorov,
  and D.~Chru\ifmmode \acute{s}\else \'{s}\fi{}ci\ifmmode~\acute{n}\else
  \'{n}\fi{}ski.
\newblock Probability representation of quantum dynamics using pseudostochastic
  maps.
\newblock {\em Phys. Rev. A}, 101:052320, May 2020.

\bibitem{Yashin2020}
V~I Yashin, E~O Kiktenko, A~S Mastiukova, and A~K Fedorov.
\newblock Minimal informationally complete measurements for probability
  representation of quantum dynamics.
\newblock {\em New Journal of Physics}, 22(10):103026, oct 2020.

\bibitem{Reed1972}
Michael Reed and Barry Simon.
\newblock {\em Methods of Modern Mathematical Physics - I: Functional
  Analysis}.
\newblock Academic Press Inc., 1972.

\bibitem{Antoine2018}
Jean-Pierre Antoine, Fabio Bagarello, and Jean-Pierre Gazeau, editors.
\newblock {\em Coherent States and Their Applications}.
\newblock Springer Proceedings in Physics. Springer, Cham, 1 edition, 2018.

\bibitem{Klauder1985}
B.-S.~Skagerstam J.~R.~Klauder.
\newblock {\em Coherent States --- Applications in Physics and Mathematical
  Physics}.
\newblock World Scientific, 1985.

\bibitem{A.Ferraro2005}
M.G.A.Paris A.Ferraro, S.Olivares.
\newblock Gaussian states in continuous variable quantum information.
\newblock {\em Bibliopolis, Napoli}, 2005.

\bibitem{Wigner1932}
E.~Wigner.
\newblock On the quantum correction for thermodynamic equilibrium.
\newblock {\em Phys. Rev.}, 40:749--759, Jun 1932.

\bibitem{Husimi1940}
K\^{o}di HUSIMI.
\newblock Some formal properties of the density matrix.
\newblock {\em Proceedings of the Physico-Mathematical Society of Japan. 3rd
  Series}, 22(4):264--314, 1940.

\bibitem{Glauber1963}
Roy~J. Glauber.
\newblock Coherent and incoherent states of the radiation field.
\newblock {\em Phys. Rev.}, 131:2766--2788, Sep 1963.

\bibitem{Sudarshan1963}
E.~C.~G. Sudarshan.
\newblock Equivalence of semiclassical and quantum mechanical descriptions of
  statistical light beams.
\newblock {\em Phys. Rev. Lett.}, 10:277--279, Apr 1963.

\bibitem{Nair2017}
Ranjith Nair.
\newblock Nonclassical distance in multimode bosonic systems.
\newblock {\em Phys. Rev. A}, 95:063835, Jun 2017.

\bibitem{Giedke2002}
G\'eza Giedke and J.~Ignacio~Cirac.
\newblock Characterization of gaussian operations and distillation of gaussian
  states.
\newblock {\em Phys. Rev. A}, 66:032316, Sep 2002.

\bibitem{Adesso_PhD}
Gerardo Adesso.
\newblock {\em Entanglement of Gaussian states}.
\newblock PhD thesis, Dipartimento di Fisica “E. R. Caianiello” - Facoltà
  di Scienze Matematiche Fisiche e Naturali, Università degli Studi di
  Salerno, 2007.

\bibitem{Link2015}
Valentin Link and Walter~T Strunz.
\newblock Geometry of gaussian quantum states.
\newblock {\em Journal of Physics A: Mathematical and Theoretical},
  48(27):275301, jun 2015.

\bibitem{Brugger2022}
Jonathan Brugger, Christoph Dittel, and Andreas Buchleitner.
\newblock Many-body quantum non-markovianity.
\newblock {\em arXiv:2207.06100}, 2022.

\bibitem{Weedbrook2012}
Christian Weedbrook, Stefano Pirandola, Ra\'ul Garc\'{\i}a-Patr\'on, Nicolas~J.
  Cerf, Timothy~C. Ralph, Jeffrey~H. Shapiro, and Seth Lloyd.
\newblock Gaussian quantum information.
\newblock {\em Rev. Mod. Phys.}, 84:621--669, May 2012.

\bibitem{Grimmer2018}
Daniel Grimmer, Eric Brown, Achim Kempf, Robert~B. Mann, and Eduardo
  Mart\'{i}n-Mart\'{i}nez.
\newblock A classification of open gaussian dynamics.
\newblock {\em Journal of Physics A: Mathematical and Theoretical},
  51(24):245301, 2018.

\bibitem{Asmussen2003}
S{\o}ren Asmussen.
\newblock {\em Applied Probability and Queues}.
\newblock Stochastic Modelling and Applied Probability. Springer New York, NY,
  2 edition, 2003.

\bibitem{Caldeira1983}
A.O. Caldeira and A.J. Leggett.
\newblock Path integral approach to quantum brownian motion.
\newblock {\em Physica A: Statistical Mechanics and its Applications},
  121(3):587--616, 1983.

\bibitem{Grabert1988}
Hermann Grabert, Peter Schramm, and Gert-Ludwig Ingold.
\newblock Quantum brownian motion: The functional integral approach.
\newblock {\em Physics Reports}, 168(3):115 -- 207, 1988.

\bibitem{Einsiedler2020}
Simon Einsiedler, Andreas Ketterer, and Heinz-Peter Breuer.
\newblock Non-markovianity of quantum brownian motion.
\newblock {\em Phys. Rev. A}, 102:022228, Aug 2020.

\end{thebibliography}

\end{document}